\documentclass[a4paper,10pt]{article}
\usepackage{graphicx} 
\usepackage[utf8]{inputenc}
\usepackage[T1]{fontenc}
\usepackage{comment}
\usepackage{float}
\usepackage{natbib}
\usepackage{pdflscape}  
\usepackage{xcolor}
\usepackage{booktabs, siunitx}

\usepackage{a4wide}
\usepackage{indentfirst}
\usepackage{authblk}

\usepackage{amsmath,amssymb}
\usepackage{amsthm}
\usepackage[colorlinks=true, linkcolor=black, citecolor=black, urlcolor=blue]{hyperref}
\usepackage{footmisc} 
\usepackage{subfig}
\usepackage{graphicx}
\usepackage{subcaption}
\usepackage{chngcntr}
\counterwithin{table}{section}
\counterwithin{figure}{section}

\usepackage[margin=1.2in]{geometry}
\usepackage{graphicx}
\usepackage{listings}
\usepackage[utf8]{inputenc}
\usepackage{appendix}
\usepackage{shuffle}
\numberwithin{equation}{section}
\newtheorem{teo}{Theorem}[section]
\newtheorem*{teo*}{Theorem}
\newtheorem*{prop*}{Proposition}
\newtheorem*{corol*}{Corollary}
\newtheorem{prop}[teo]{Proposition}
\newtheorem{example}[teo]{Example}
\newtheorem{corol}[teo]{Corollary}

\theoremstyle{definition}

\newtheorem{remark}[teo]{Remark}

\newcommand{\Var}{\operatorname{Var}}

\newcommand{\blfootnote}[2]{%
  \begingroup
  \renewcommand\thefootnote{#1}%
  \addtocounter{footnote}{1}%
  \footnotetext{#2}%
  \addtocounter{footnote}{-1}%
  \endgroup
}

\title{Analytic approximation for Bachelier option prices and applications}

\author{Elisa Alòs\textsuperscript{*}, Òscar Burés\textsuperscript{†}}
\date{\today}

\begin{document}

\maketitle

\blfootnote{*}{Department of Economics and Business, Universitat Pompeu Fabra and Barcelona School of Economics. Ramón Trias Fargas 25-27, 08005, Barcelona, Spain.}
\blfootnote{†}{Departament de Matemàtica Econòmica, Financera i Actuarial, Universitat de Barcelona. Diagonal 690--696, 08034 Barcelona, Spain.}

\begingroup
\renewcommand{\thefootnote}{}  
\renewcommand{\footnotemargin}{0pt}  
\footnotetext{%
\noindent\hspace{0pt}Òscar Burés supported by program AGAUR-FI ajuts (2025 FI-1 00580) from the Department of Research and Universities of the Government of Catalonia and the co-funding of the European Social Fund Plus (ESF+). }
\endgroup

\begin{abstract}
    It is well-known that, in the Bachelier model, when asset prices and volatilities are uncorrelated, the implied volatility coincides with the fair value of the volatility swap. In this paper, via classical It\^o calculus and Taylor expansions, we write the price for out-of the-money (OTM) and in-the-money (ITM) options as an expansion with respect to the moneyness, where the coefficients are related to the negative (non-integer) powers of the future mean volatility. As an a application, we use it as a control variate to reduce the variance of Monte Carlo option prices in the correlated case.
\end{abstract}

\section{Introduction}
Today, option pricing theory is based largely on the Black-Scholes model, in which asset prices are log-normal. Most of the popular models in the financial industry (such as local or stochastic volatility models) are extensions of it. In this framework, asset prices are positive. This hypothesis is not always satisfied (as has recently been registered for interest rates or commodities). Then, in some scenarios, markets have moved to the Bachelier model (see \cite{ASENS_1900_3_17__21_0} and \cite{choi2022black}), where asset prices are assumed to be normal.

One of the main problems in option pricing (both in the Black-Scholes or in the Bachelier framework) is the construction of adequate closed-form approximation formulas for option prices and implied volatilities. Towards this end, several works are devoted to constructing expansions in which the leading term is the Black-Scholes/Bachelier price evaluated at a proxy for the implied volatility, that is usually the spot volatility or the variance swap. One classical approach  relies on the analysis of the corresponding PDE with respect to a specific model parameter (see, among others,  \cite{lewis2000option}, \cite{hagan2002managing}, \cite{fouque2000derivatives}, and \cite{fouque2}. Other researchers follow a probabilistic approach, where option prices depend on the joint distribution of the variance swap and asset prices (see, for example,  \cite{antonelli2006pricing}, \cite{fukasawa2011asymptotic}, \cite{BergomiGuyon2012}, \cite{alos2012decomposition}, and \cite{alos2020exponentiation}). The results obtained in these latest works are very general and can be applied when the volatility is not Markovian, as in the case of rough volatilities. Some specific works on the Bachelier implied volatility include \cite{baviera2025smile}, \cite{unkwnown}, \cite{alos2025short}, and the references therein. 

In all the above papers, the expansion contains a first correction term due to the correlation (associated with the leverage swap), a second one due to the vol-of-vol (associated with the quadratic variation of the variance swap), and higher-order terms. Even when these approximations work well near at-the-money strikes, they are not analytical (see, for example, \cite{Lewis02092022}), and their region of validity is limited.

Our purpose in this paper is to obtain an analytical expansion for Bachelier option prices, in the case of uncorrelated asset prices and volatilities. Via adequate decomposition formulas, we write the option price as the ATM price plus a correction due to the moneyness. Then, a Taylor expansion allows us to write this correction in terms of powers of the moneyness, with coefficients depending on negative (non-integer) powers of the future integrated volatility. 

Our numerical examples on the SABR and the Heston model confirm the validity of this approximation. As an application, we use it as a control variate in the simulation of option prices. This technique leads to a significant variance reduction in the Monte Carlo option pricing.

\section{Preliminaries}
We consider the Bachelier model for asset prices under a risk-neutral probability $P$:
\begin{equation}
\label{Bachelier}
dX_t=\sigma _{t}\left( \rho dW_{t}+\sqrt{1-\rho ^{2}}%
B_{t}\right) ,\,t\in \lbrack 0,T] 
\end{equation}
for some $T>0$, where $W$ and $B$ are independent standard Brownian motions, $\rho \in \left[-1,1\right],$ and $\sigma$ is a square integrable
process adapted to the filtration generated by the Brownian motion $%
W$. As in the previous chapters, we denote by $\mathcal{F}^W$ and $\mathcal{F}^B$ the filtrations generated by $W$ and $ B$, respectively, and 
$\mathcal{F}:=\mathcal{F}^W\vee\mathcal{F}^B$. If $\sigma$ is constant and $\rho=0$, the above model is called the {\bf Bachelier model}.

We denote by $Bac(T,x,k,\sigma)$  the classical Bachelier price of a European call with time to maturity \(T\), current stock price \(x\),
strike price \(k\) and volatility \(\sigma\). That is, $$
  Bac(T,x,k,\sigma)=(x-k)N(d_{Bac}(k,\sigma))+N'(d_{Bac}(k,\sigma))\sigma\sqrt{T},$$
with
$$
  d_{Bac}(k,\sigma)=\frac{x-k}{\sigma\sqrt{T}},$$ where \(N\) is the cumulative distribution
function and the probability density  function of the standard normal random variable. 

\vspace{0.5cm}

We denote by $\mathcal{L}_{Bac}\left( \sigma \right) $ denotes the Bachelier 
differential operator with volatility $\sigma :$%
\begin{equation*}
\mathcal{L}_{Bac}\left( \sigma \right) =\frac{\partial }{\partial t}+\frac{1}{%
2}\sigma ^{2}\frac{\partial ^{2}}{\partial x^{2}}
\end{equation*}
It is well known that $\mathcal{L}_{Bac}\left( \sigma \right) Bac\left( \cdot
,\cdot,\cdot ;\sigma \right) =0.$

\vspace{0.5cm}

Finally, we define the Bachelier implied volatility of a traded call option $I^{Bac}(k)$  as the unique volatility parameter one should put in the Bachelier formula to get the market option price $V$. That is, the quantity $I^{Bac}(k)$ such that
$$
V=Bac(T,X_{0},k,I^{Bac}(k)),
$$
where $X_0$ denotes the asset price and $k$ the strike price of the option. Notice that, if $k=X_0$,
\begin{equation}
\label{Ibac}
V=Bac(T,X_{0},X_0,I^{Bac}(X_0))=N'(0)I^{Bac}\sqrt{T}=\frac{1}{\sqrt{2\pi}}I^{Bac}(X_0)\sqrt{T}.
\end{equation}
At the same time, due to the definition of the Black-Scholes implied volatility,
\begin{equation}
\label{Ibs}
V=BS(T,X_0,X_0,I(X_0))=X_0\left(2N\left(  \frac{I(X_0)\sqrt{T}}{2} \right)-1\right).
\end{equation}
Then, (\ref{Ibac}) and (\ref{Ibs}) imply the following conversion formula for ATM implied volatilities:
\begin{equation}
\label{Ibacbsatm}
I^{Bac}(X_0)=\frac{\sqrt{2\pi}}{\sqrt{T}}X_0\left(2N\left(  \frac{I(X_0)\sqrt{T}}{2} \right)-1\right)
\end{equation}
We will also need the following notations. 
\begin{itemize}
\item $v=\sqrt{\frac{1}{T}E\int_0^T\sigma_s^2 ds} $ is the square root of the variance swap.
\item $\hat{v}=E\sqrt{\frac{1}{T}\int_0^T\sigma_s^2 ds} $ is the volatility swap.
\item For all $s\in [0,T]$, we define $M_s=\frac{1}{T}E_s\int_0^T\sigma_u^2ds$.
\item For all $s\in [0,T]$, we denote $v_{s}=\sqrt{\frac{1}{T}E_s\int_0^T\sigma_u^2 du}$. In particular, $v_{0}=v.$
\end{itemize}
Notice that $v=\sqrt{M_0}$, $v_s=\sqrt{M_s}$, and $\hat{v}=E\sqrt{M_T}$. Then, a direct application of It\^o's formula to the process $M$ and the function $f(x)=\sqrt{x}$ leads to the following relationship between the variance and the volatility swap
\begin{equation}
\label{convexity}
\hat{v}=v-\frac{1}{8}E\int_0^T\frac{1}{v_{s}^3}d\langle M,M\rangle_s
\end{equation}
\section{An analytical expansion for option prices}
Our approach is based on the following decomposition for option prices in the uncorrelated case. We assume the following integrability condition.

(H) For all $p>1$, $v^{-1}$ and $|\frac{d\langle M,M\rangle_s}{ds}|$ are in $L^p ([0,T]\times \Omega)$. 

\begin{prop}[Decomposition formula for option prices in the uncorrelated case]\label{decomp_formula0}
Consider the model (\ref{Bachelier}) with $\rho=0$ and assume that Hypothesis (H) holds. Then
\begin{eqnarray*}
V&=&Bac(T,X_{0},k,v) \\
&&+\frac{T^2}{8} \,E\left( \int_{0}^{T}K_{Bac}(T,X_0,k,v_s)d\langle
M,M\rangle_s \right),
\end{eqnarray*}
where 
\begin{eqnarray*}
K_{Bac}(T,x,\sigma) & = &
\frac{\partial^4 Bac}{\partial x^4}(T,x,\sigma) \\
& = &  \frac{(x-k)^2-T\sigma^2}{T^{\frac52}\sigma^5}\frac{\exp \left(-\frac{d_{Bac}^2(\sigma)}{2}\right)}{ \sqrt{2\pi}}%
\end{eqnarray*}
\end{prop}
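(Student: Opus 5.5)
The plan is to reduce the problem to a Bachelier price with random volatility by conditioning on the volatility path, and then to expand that price along the martingale $M$ with It\^o's formula. Since $\rho=0$, we have $X_T=X_0+\int_0^T\sigma_s\,dB_s$ with $\sigma$ adapted to $\mathcal{F}^W$ and $B$ independent of $W$. Conditionally on $\mathcal{F}^W_T$, the variable $X_T$ is therefore Gaussian with mean $X_0$ and variance $\int_0^T\sigma_s^2ds=TM_T$. Hence
\[
V=E\big[(X_T-k)^+\big]=E\Big[Bac\big(T,X_0,k,\sqrt{M_T}\big)\Big]=E\big[F(M_T)\big],
\]
where $F(y):=Bac(T,X_0,k,\sqrt{y})$. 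Since $M_0=v^2$, we have $F(M_0)=Bac(T,X_0,k,v)$. The statement then follows if we show $E[F(M_T)]-F(M_0)=\frac{T^2}{8}E\int_0^T K_{Bac}(T,X_0,k,v_s)\,d\langle M,M\rangle_s$.

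The key identity is that the Bachelier price depends on $(T,\sigma)$ only through the total variance $w=\sigma^2T$, and as a function of $w$ it solves the heat equation $\partial_w Bac=\frac12\partial_x^2 Bac$; this is the content of $\mathcal{L}_{Bac}(\sigma)Bac=0$. Since $w=Ty$, the chain rule gives
\[
F'(y)=\frac{T}{2}\,\partial_x^2 Bac\big(T,X_0,k,\sqrt y\big),\qquad
F''(y)=\frac{T^2}{4}\,\partial_x^4 Bac\big(T,X_0,k,\sqrt y\big).
\]
Applying It\^o's formula to $F(M_s)$ on $[0,T]$ then yields
\[
F(M_T)=F(M_0)+\frac{T}{2}\int_0^T\partial_x^2Bac(T,X_0,k,v_s)\,dM_s+\frac{T^2}{8}\int_0^T\partial_x^4Bac(T,X_0,k,v_s)\,d\langle M,M\rangle_s .
\]
Taking expectations, the stochastic integral should have zero mean, which leaves exactly the claimed formula. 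The explicit expression for $K_{Bac}$ is a routine differentiation: $\partial_x^2Bac=N'(d_{Bac})/(\sigma\sqrt T)$, and differentiating twice more in $x$ gives the Hermite-type factor $((x-k)^2-T\sigma^2)/(T^{5/2}\sigma^5)$.

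The main obstacle is justifying the two analytic steps, namely It\^o's formula and the vanishing of the martingale term. The derivatives of $F$ blow up as $y\to0$, with $\partial_x^2 Bac\sim \sigma^{-1}$ and $\partial_x^4Bac\sim\sigma^{-3}$ up to Gaussian factors. To handle this, I would first apply It\^o's formula on $[0,T-\varepsilon]$, or to $F_\varepsilon(y)=F(y+\varepsilon)$, so that $F$ is smooth on the relevant range. I would then bound $|\partial_x^2Bac(v_s)|\le C v_s^{-1}$ and $|\partial_x^4Bac(v_s)|\le C v_s^{-3}$. Hypothesis (H), namely $L^p$ integrability of $v^{-1}$ and of the density of $\langle M,M\rangle$, combined with H\"older's inequality, gives $E\int_0^T v_s^{-2}\,d\langle M,M\rangle_s<\infty$. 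This makes $\int \partial_x^2Bac\,dM$ a true martingale and gives dominated convergence for the $d\langle M,M\rangle$ term as $\varepsilon\to0$. Continuity of $F$ and uniform integrability of $F(M_{T-\varepsilon})$, which holds because $Bac$ grows at most linearly in $\sigma$ and $E M_T<\infty$, then pass the left-hand side to the limit and complete the argument.
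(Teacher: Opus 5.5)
Your proposal is correct and follows the paper's proof: you condition on $\mathcal{F}^W_T$ to get $V=E\,Bac(T,X_0,k,v_T)$, apply It\^o's formula to $M$ using the identity $\frac{1}{\sigma T}\partial_\sigma Bac=\partial_x^2 Bac$ (which is your total-variance chain rule), and take expectations. Your integrability discussion goes beyond the paper, which takes expectations without justification; one small correction is that dominated convergence for the $d\langle M,M\rangle$ term needs $E\int_0^T v_s^{-3}\,d\langle M,M\rangle_s<\infty$, not the $v_s^{-2}$ bound, and this follows from (H) and H\"older in exactly the same way.
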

\begin{proof}
Using conditional expectations, classical arguments allow us to write the option price $V$ as
$$V=E(Bac(T,X_0,k,v_{T})$$
Now, a direct application of It\^o's formula and the fact that
$$
\frac{\partial Bac}{\partial \sigma}(T,X_{0},k,\sigma)\frac{1}{\sigma T}=\frac{\partial^2 Bac}{\partial x^2}(T,X_{0},k,\sigma)
$$
give us that
\begin{eqnarray}
Bac(T,X_{0},k,v_{T})&=&Bac(T,X_{0},k,v_{T})\nonumber\\
&+&\frac12 T\int_0^T\frac{\partial^2 Bac}{\partial x^2}(T,X_{0},k,v_s)dM_s\nonumber\\
&+&\frac18 T^2\int_0^T\frac{\partial^4 Bac}{\partial x^4}(T,X_{0},k,v_s)d\langle M,M\rangle_s.\nonumber
\end{eqnarray}
Then, taking expectations, and taking into account that $v_{T}=v$, we get
\begin{eqnarray}
V&=&Bac(T,X_{0},k,v)\nonumber\\
&+&\frac18 T^2E\int_0^T\frac{\partial^4 Bac}{\partial x^4}(T,X_{0},k,v_s)d\langle M,M\rangle_s,\nonumber
\end{eqnarray}
and now the proof is complete.
\end{proof}
As a direct corollary, we get the following decomposition formula
\begin{corol}
    Assume the model (\ref{Bachelier}) and assume that hypothesis (H) holds. Then

 \begin{eqnarray*}
        \label{segona}
V =Bac(T,X_{0},k,v)&+&\frac{(X_0-k)^2}{8T^{\frac12}\sqrt{2\pi}}E\int_0^T \frac{1}{v_s^5}\exp \left(-\frac{d_{Bac}^2(v_s)}{2}\right)d\langle M,M\rangle_s\nonumber\\
&-&\frac{1}{8} \frac{\sqrt{T}}{\sqrt{2\pi}}E\left( \int_{0}^{T}\exp \left(-\frac{d_{Bac}^2(\sigma)}{2}\right)\frac{1}{v_s^3}d\langle
M,M\rangle_s \right).
\end{eqnarray*}
\end{corol}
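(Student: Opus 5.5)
The plan is to substitute the explicit form of $K_{Bac}$ from Proposition \ref{decomp_formula0} into the correction term and split it into two pieces according to the numerator $(X_0-k)^2 - T v_s^2$. We work in the uncorrelated setting $\rho=0$ of Proposition \ref{decomp_formula0}, since that is what the corollary inherits.

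First, evaluate the kernel at $x=X_0$ and $\sigma=v_s$:
\begin{equation*}
K_{Bac}(T,X_0,k,v_s)=\frac{(X_0-k)^2-Tv_s^2}{T^{5/2}v_s^5}\,\frac{1}{\sqrt{2\pi}}\exp\left(-\frac{d_{Bac}^2(v_s)}{2}\right),
\end{equation*}
where $d_{Bac}(v_s)=(X_0-k)/(v_s\sqrt{T})$. Multiplying by the prefactor $T^2/8$ gives
\begin{equation*}
\frac{T^2}{8}K_{Bac}(T,X_0,k,v_s)=\frac{(X_0-k)^2}{8T^{1/2}\sqrt{2\pi}}\,\frac{1}{v_s^5}\,e^{-d_{Bac}^2(v_s)/2}-\frac{\sqrt{T}}{8\sqrt{2\pi}}\,\frac{1}{v_s^3}\,e^{-d_{Bac}^2(v_s)/2}.
\end{equation*}
The powers of $T$ come out as follows: $T^2\cdot T^{-5/2}=T^{-1/2}$ in the first term, and $T^2\cdot T\cdot T^{-5/2}=T^{1/2}$ in the second.

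Next, integrate against $d\langle M,M\rangle_s$, which is a nonnegative measure, and take expectations. The constants $(X_0-k)^2$ and $\sqrt{T}$ are deterministic and can be pulled out of both the integral and the expectation. This gives exactly the two displayed terms, the exponent in the second one being read as $d_{Bac}^2(v_s)$.

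The only real point is to make sure the correction term can be split into two expectations. This requires each piece to be separately integrable, and I would check it with Hypothesis (H). Since $e^{-d^2/2}\le 1$, the pieces are bounded in absolute value by $v_s^{-5}\,|d\langle M,M\rangle_s/ds|$ and $v_s^{-3}\,|d\langle M,M\rangle_s/ds|$. Hölder's inequality then gives finiteness, using the $L^p$ integrability of the inverse volatility and of the density of $\langle M,M\rangle$ in (H), which I read as applying to the process $v_s^{-1}$. With both expectations finite, linearity of expectation justifies the splitting and completes the argument.
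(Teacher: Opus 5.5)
Your proof is correct and follows the paper's argument: substitute the explicit kernel $K_{Bac}$ into Proposition \ref{decomp_formula0} and split it along the numerator $(X_0-k)^2-Tv_s^2$. The paper additionally rewrites the second piece through $\partial Bac/\partial\sigma$, which is only cosmetic here. Your added points are sound and go slightly beyond what the paper writes: you note that $\rho=0$ is inherited from the proposition, that $d_{Bac}(\sigma)$ should read $d_{Bac}(v_s)$, and you give a H\"older integrability check under (H) that justifies splitting the expectation, which the paper leaves implicit.
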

\begin{proof}
Notice that 
\begin{eqnarray*}
K_{Bac}(T,x,k,\sigma) & = &   \frac{(x-k)^2-T\sigma^2}{T^{\frac52}\sigma^5}\frac{\exp \left(-\frac{d_{Bac}^2(\sigma)}{2}\right)}{ \sqrt{2\pi}}\nonumber\\
&=&\frac{(x-k)^2}{T^{\frac52}\sigma^5}\frac{\exp \left(-\frac{d_{Bac}^2(\sigma)}{2}\right)}{ \sqrt{2\pi}}\nonumber\\
&-&\frac{1}{T^\frac32\sigma^3\sqrt{2\pi}}\exp \left(-\frac{d_{Bac}^2(\sigma)}{2}\right).
\end{eqnarray*}
Now, as 
$$\frac{\partial Bac}{\partial \sigma}(T,x,k,v_s)
=\frac{\sqrt{T}}{\sqrt{2\pi}}\exp \left(-\frac{d_{Bac}^2(\sigma)}{2}\right)$$
it follows that 
\begin{eqnarray*}
K_{Bac}(T,x,k,\sigma) & = &   \frac{(x-k)^2}{T^{\frac52}\sigma^5}\frac{\exp \left(-\frac{d_{Bac}^2(\sigma)}{2}\right)}{ \sqrt{2\pi}}\nonumber\\
&-&\frac{1}{T^2 \sigma^3}\frac{\partial Bac}{\partial \sigma}(T,x,k,v_s).
\end{eqnarray*}
Then, Proposition \ref{decomp_formula0}
leads to

\begin{eqnarray*}
V =Bac(T,X_{0},k,v)&+&\frac{(X_0-k)^2}{8T^{\frac12}\sqrt{2\pi}}E\int_0^T \frac{1}{v_s^5}\exp \left(-\frac{d_{Bac}^2(v_s)}{2}\right)d\langle M,M\rangle_s\nonumber\\
&-&\frac{1}{8} \frac{\sqrt{T}}{\sqrt{2\pi}}E\left( \int_{0}^{T}\exp \left(-\frac{d_{Bac}^2(\sigma)}{2}\right)\frac{1}{v_s^3}d\langle
M,M\rangle_s \right).
\end{eqnarray*}

\end{proof}
\begin{remark}[The ATMI and the volatility swap]
 Notice that, if $k=x$, $\frac{\partial Bac}{\partial \sigma}(T,x,k,v_s)$ is deterministic and then
$$Bac(T,X_0,X_0,\hat{v})=Bac(T,X_{0},X_0,v)-\frac{1}{8} \,E\left( \int_{0}^{T}\frac{\partial Bac}{\partial \sigma}(T,X_0,X_0,v_s)\frac{1}{v_s^3}d\langle
M,M\rangle_s \right),$$
which implies that, for ATM options, $V=Bac(T,X_0,X_0,\hat{v})$, according to the well-known properties of the Bachelier implied volatility.
\end{remark}
Now we are in a position to prove the main result of this paper.
\begin{teo}[Price expansion]
\label{principal}
Consider the model (\ref{Bachelier}) with $\rho=0$ and assume that Hypothesis (H) holds. Then    

\begin{eqnarray}
    V&=&Bac(T,X_0,k,v)+\frac{T^\frac12}{\sqrt{2\pi}}(\hat{v}-v)\nonumber\\
&-&\frac{T^{\frac12}}{2\sqrt{2\pi}}\sum_{n=1} \frac{1}{n!(2n-1)} \left(-\frac{(X_0-k)^2}{2 T}\right)^n\nonumber\\
&&\times\left[
E\left(\frac{1}{T}\int_0^T \sigma_s^2ds\right)^{\frac12-n}-\left(\frac{1}{T}\int_0^T E (\sigma_s^2)ds\right)^{\frac12-n}\right]\nonumber\\
\end{eqnarray}
\end{teo}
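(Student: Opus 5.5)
The plan is to bypass the stochastic integrals of the Corollary and work directly with the representation $V=E\,Bac(T,X_0,k,v_T)$ from the proof of Proposition \ref{decomp_formula0}. Here $v_T=\sqrt{M_T}=\big(\frac1T\int_0^T\sigma_s^2ds\big)^{1/2}$ and $v=\sqrt{E M_T}=\big(\frac1T\int_0^T E(\sigma_s^2)ds\big)^{1/2}$. Both bracketed quantities in the statement are then of the form $E\,g(v_T)-g(v)$ with $g(\sigma)=\sigma^{1-2n}$. So it suffices to expand the function $\sigma\mapsto Bac(T,X_0,k,\sigma)$ as a power series in $\sigma$ with coefficients that are powers of the moneyness, and then take expectations term by term.

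\textbf{Step 1 (expansion in $\sigma$).} Set $a=(X_0-k)/\sqrt T$. As used in the proof of the Corollary, $\partial_\sigma Bac(T,X_0,k,\sigma)=\frac{\sqrt T}{\sqrt{2\pi}}\exp(-a^2/(2\sigma^2))$. Expanding the exponential gives
$$\partial_\sigma Bac=\frac{\sqrt T}{\sqrt{2\pi}}\sum_{n\ge0}\frac{1}{n!}\Big(-\frac{a^2}{2}\Big)^n\sigma^{-2n},$$
and this converges uniformly on compact subsets of $(0,\infty)$. Integrating in $\sigma$ and using $\int\sigma^{-2n}d\sigma=\sigma^{1-2n}/(1-2n)$ yields, for $\sigma>0$,
$$Bac(T,X_0,k,\sigma)=C(k)+\frac{\sqrt T}{\sqrt{2\pi}}\Big[\sigma-\sum_{n\ge1}\frac{1}{n!(2n-1)}\Big(-\frac{(X_0-k)^2}{2T}\Big)^n\sigma^{1-2n}\Big].$$
The constant $C(k)$ does not depend on $\sigma$, and it cancels in the next step.

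\textbf{Step 2 (take expectations).} Evaluate the expansion at $\sigma=v_T$, take expectations, and subtract the same expansion at $\sigma=v$. The constant $C(k)$ cancels. The linear term produces $\frac{\sqrt T}{\sqrt{2\pi}}(E v_T-v)=\frac{\sqrt T}{\sqrt{2\pi}}(\hat v-v)$. The $n$-th term produces $E M_T^{\frac12-n}-(EM_T)^{\frac12-n}$, which is exactly the bracket in the statement. The overall prefactor of the sum must then be fixed by carefully tracking the constant from $\partial_\sigma Bac$. My computation gives $\frac{\sqrt T}{\sqrt{2\pi}}$, not $\frac{\sqrt T}{2\sqrt{2\pi}}$. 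I would double-check this against the Corollary: expand $\exp(-d^2_{Bac}(v_s)/2)$ there, apply It\^o's formula to $M_s^{\frac12-n}$, and compare $\frac{d^2}{dx^2}x^{\frac12-n}=(\frac12-n)(-\frac12-n)x^{-\frac32-n}$ with the kernel coefficients. This cross-check is to make sure the factor $\frac12$ in the statement is, or is not, a misprint.

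\textbf{Main obstacle: exchanging $E$ and $\sum_n$.} Each term is finite, since Hypothesis (H) gives negative moments of all orders of the integrated variance, so $E M_T^{\frac12-n}<\infty$. However, absolute convergence would require $E\big[M_T^{1/2}\exp\big(\frac{(X_0-k)^2}{2TM_T}\big)\big]<\infty$, and this can fail when $M_T$ has heavy mass near $0$. I would first try dominated convergence on partial sums. The partial sums of the alternating exponential series, integrated in $\sigma$, are bounded by the finite Taylor polynomial plus the next term (a Taylor remainder bound). This would give the identity at least as an asymptotic expansion in $(X_0-k)^2$ at every finite order. Full convergence of the series would then follow under the extra condition that the exponential moment above is finite, for instance when the volatility is bounded below. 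I expect this justification, not the algebra, to be the delicate point.
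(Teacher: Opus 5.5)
Your prefactor is right: the $\frac12$ in the displayed statement is a misprint, and the sum should carry $\frac{T^{1/2}}{\sqrt{2\pi}}$. You do not need the cross-check to settle this. Directly, $Bac(T,X_0,k,\sigma)=\frac{X_0-k}{2}+\frac{\sigma\sqrt T}{\sqrt{2\pi}}+\frac{(X_0-k)^2}{2\sigma\sqrt{2\pi T}}+O\bigl((X_0-k)^4\bigr)$, which matches your $n=1$ term with prefactor $\frac{\sqrt T}{\sqrt{2\pi}}$. The paper's own chain of equalities gives the same result, since $\frac{2n+1}{8\,n!}\cdot\frac{2}{n^2-1/4}=\frac{1}{n!(2n-1)}$. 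The Delta and Gamma formulas in the remark on Greeks are also exactly the $X_0$-derivatives of the series with prefactor $\frac{T^{1/2}}{\sqrt{2\pi}}$. So your argument proves the correct identity. Incidentally, $C(k)=(X_0-k)/2$, as the limit $\sigma\to\infty$ shows.

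Your route differs from the paper's. The paper also starts from $V=E\,Bac(T,X_0,k,v_T)$, but then applies It\^o's formula to $M$ to get $V=Bac(T,X_0,k,v)+\frac{T^2}{8}E\int_0^T\frac{\partial^4 Bac}{\partial x^4}(T,X_0,k,v_s)\,d\langle M,M\rangle_s$. It splits that kernel and expands $\exp(-d_{Bac}^2(v_s)/2)$ in powers of $v_s^{-2}$. It then applies It\^o's formula a second time, to $M^{\frac12-n}$, to turn $E\int_0^T v_s^{-3-2n}d\langle M,M\rangle_s$ back into $\frac{2}{n^2-1/4}\bigl(E M_T^{\frac12-n}-M_0^{\frac12-n}\bigr)$. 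The $n=0$ term becomes $\hat v-v$ through the convexity formula (\ref{convexity}). In expectation the two It\^o steps undo each other, so your argument is the same identity with the stochastic calculus removed: a deterministic expansion of $\sigma\mapsto Bac(T,X_0,k,\sigma)$ in powers $\sigma^{1-2n}$, followed by a single expectation. It is shorter, and it needs neither the martingale property of the $dM$-integrals nor any condition on $d\langle M,M\rangle_s/ds$, only finite negative moments of $M_T$. What the paper's route buys is an interpretation of each coefficient as a weighted vol-of-vol quantity $E\int_0^T v_s^{-3-2n}d\langle M,M\rangle_s$, tied to the decomposition formula and the ATM/volatility-swap remark.

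On exchanging $E$ and $\sum_n$, the paper swaps $E\int_0^T$ with $\sum_n$ without comment, so your proposal is more careful than the source. Your caution is also well founded, because the issue is more than technical. Take the SABR volatility from the numerical section. A Cameron--Martin small-ball bound along the path $b_s=-\nu^{-1}\log(1+s/\tau)$ gives $\log P(M_T<\delta)\ge-\frac{\sigma_0^2}{2\nu^2T\delta}(1+o(1))$. Hence $E M_T^{\frac12-n}\ge n!\,(2\nu^2T/\sigma_0^2)^n e^{o(n)}$, and the terms of the series do not tend to $0$ once $|X_0-k|>\sigma_0/\nu$. In the paper's example that threshold is $40$, which is consistent with the sharp error growth near $k=140$ in the SABR tables. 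So under (H) alone, the right statement is your asymptotic expansion with the alternating-remainder bound. Genuine equality holds only for moneyness controlled by the left tail of $M_T$, for instance under your exponential-moment condition. That condition is sufficient rather than strictly required: since $\sum_n\frac{x^n}{n!(2n-1)}\sim\frac{e^x}{2x}$, absolute convergence really needs only $E[M_T^{3/2}e^{u/M_T}]<\infty$ with $u=(X_0-k)^2/(2T)$.
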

\begin{proof}
    Equation (\ref{segona}) gives us that
\begin{eqnarray}
V=Bac(T,X_0,k,v)&-&\frac{T^{\frac12}}{4\sqrt{2\pi}}E\int_0^T \frac{1}{v_s^3} \left(-\frac{(X_0-k)^2}{2v_s^2 T}\right)\exp \left(-\frac{(X_0-k)^2}{2v_s^2 T}\right)d\langle M,M\rangle_s\nonumber\\
&-&\frac{1}{8} \frac{\sqrt{T}}{\sqrt{2\pi}}E\left( \int_{0}^{T}\exp \left(-\frac{(X_0-k)^2}{2v_s^2 T}\right)\frac{1}{v_s^3}d\langle
M,M\rangle_s \right)\nonumber\\
=Bac(T,X_0,k,v)&-&\frac{T^{\frac12}}{4\sqrt{2\pi}}
\sum_{n=1} \frac{1}{(n-1)!} \left[\left(-\frac{(X_0-k)^2}{2 T}\right)^n E\int_0^T \frac{1}{v_s^{3+2n}} d\langle M,M\rangle_s\right]\nonumber\\
&-&\frac{1}{8} \frac{T^\frac12}{\sqrt{2\pi}}\sum_{n=0} \frac{1}{n!}\left[\left(-\frac{(X_0-k)^2}{2v_s^2 T}\right)^n E\int_{0}^{T} \frac{1}{v_s^{3+2n}}d\langle
M,M\rangle_s \right]\nonumber\\
=Bac(T,X_0,k,v)&-&\frac{1}{8} \frac{T^\frac12}{\sqrt{2\pi}} E\int_{0}^{T} \frac{1}{v_s^{3}}d\langle
M,M\rangle_s \nonumber\\
&-&\frac{T^{\frac12}}{\sqrt{2\pi}}
\sum_{n=1} \left(\frac{1}{4(n-1)!} +\frac{1}{8n!}\right)\left[\left(-\frac{(X_0-k)^2}{2 T}\right)^n E\int_0^T \frac{1}{v_s^{3+2n}} d\langle M,M\rangle_s\right]\nonumber \\
Bac(T,X_0,k,v)&-&\frac{1}{8} \frac{T^\frac12}{\sqrt{2\pi}} E\int_{0}^{T} \frac{1}{v_s^{3}}d\langle
M,M\rangle_s \nonumber\\
&-&\frac{T^{\frac12}}{\sqrt{2\pi}}
\sum_{n=1}\frac{2n+1}{8n!}\left[\left(-\frac{(X_0-k)^2}{2 T}\right)^n E\int_0^T \frac{1}{v_s^{3+2n}} d\langle M,M\rangle_s\right]\nonumber
\end{eqnarray}
Now, notice that, for all real $\theta$ 
$$
E(M_T^{\theta/2})=M_0^{\theta/2 }+\frac12\frac{\theta}{2}\left(\frac{\theta}{2}-1\right)E\int_0^T v_s^{\left(\theta-4\right)} d\langle M,M\rangle_s.
$$
Now, taking $-3-2n=\theta -4$ we have
$\theta=1-2n$ and then $\left(\frac{\theta}{2}-1\right)=n^2-0.25$. 
This implies that
\begin{eqnarray}
E\int_0^T \frac{1}{v_s^{3+2n}} d\langle M,M\rangle_s&=&\frac{2}{n^2-0.25}\left(E(M_T^{1/2-n})-M_0^{1/2-n}\right)\\
&=&\frac{2}{n^2-0.25}\left(
E\left(\frac{1}{T}\int_0^T \sigma_s^2ds\right)^{\frac12-n}-\left(\frac{1}{T}\int_0^T E \sigma_s^2ds\right)^{\frac12-n}\right),\nonumber
\end{eqnarray}
and now the proof is complete.
\end{proof}
\begin{remark}
The above result is an exact equality. It is an analytic expansion, and then it is valid for strikes and maturities. It reduces the computation of option prices $Bac(T,X_0,k, v_T)$ to the estimation of the corresponding negative (non-integer) moments of $M_T$. Once these moments are obtained and stored, the calculation of the option price in any concrete strike $k$ is obtained via a closed-form formula.
\end{remark}
\begin{remark} \label{Greeks}
Theorem \ref{principal} does not only give an expression for the option price, but it also allows us to deduce, taking derivatives, an analytical formula for the Greeks. For example, the Delta $\Delta$ of a call is given by

\begin{eqnarray}
    \Delta&=&N(d_{Bac}(v))\nonumber\\
&+&\frac{(X_0-k)}{\sqrt{2T\pi}}\sum_{n=1}\frac{1}{(n-1)!(2n-1)} \left(-\frac{(X_0-k)^2}{2 T}\right)^{n-1}\nonumber\\
&&\times\left[
E\left(\frac{1}{T}\int_0^T \sigma_s^2ds\right)^{\frac12-n}-\left(\frac{1}{T}\int_0^T E (\sigma_s^2)ds\right)^{\frac12-n}\right].\nonumber\\
\end{eqnarray}
and the following expression for the Gamma $\Gamma$ holds:

\begin{eqnarray}
   \Gamma &=&\frac{1}{\sqrt{2\pi T}v} e^{-d_{Bac}^2(v)/2}\nonumber\\
&+&\frac{1}{\sqrt{2T\pi}}\sum_{n=1} \frac{1}{(n-1)!} \left(-\frac{(X_0-k)^2}{2 T}\right)^{n-1}\nonumber\\
&&\times\left[
E\left(\frac{1}{T}\int_0^T \sigma_s^2ds\right)^{\frac12-n}-\left(\frac{1}{T}\int_0^T E (\sigma_s^2)ds\right)^{\frac12-n}\right].
\end{eqnarray}

\end{remark}
\begin{remark}
An analytical expression for the Bachelier price in the uncorrelated case can be the starting point for several applications. In the next section, we will see how to use it as a control variate in the Monte Carlo computation of option prices in the correlated case.
\end{remark}
\section{Numerical examples}
\begin{example}[The Heston model ]
Let us assume a Heston-Bachelier model where the volatility process is given by
\begin{equation}
d\sigma_t^2=-\kappa (\sigma_t^2-\theta)+\nu\sqrt{\sigma_t^2} dB_t,
\end{equation}
where $\kappa, \theta,$ and $\nu$ are positive real numbers.
Then, a straightforward computation leads to 
$$M_0=\theta  + \frac{\sigma^2-\theta}{\kappa T}  \left( 1 - e^{-\kappa T} \right).$$
Consider the parameters $\sigma_0 = 20, \kappa = 2, \theta = 400,$ and $ \nu = 20$. The first thing we will explore is how good does the approximation given by Theorem \ref{principal} work versus a benchmark. As a benchmark, we have chosen uncorrelated Call prices with initial asset price $X_0 = 100$, maturities $T \in \{0.8, 1.0, 1.2\}$ and strikes $k \in [70, 140]$. The values of such options are computed with 100,000 conditional Monte Carlo simulations with antithetic variables. For the expansion, we have chosen $N = 30$ as the number of terms of the expansion. In figures \ref{fig:heston_prices} we see how our approximation fits accurately the option prices. In order to confirm the high accuracy of our approximation, in Figure \ref{fig: heston ivs} we see that the great option price fitting is also translated in a highly accurate fit of the implied volatility smiles.
\begin{figure}[h]
    \centering
    \begin{minipage}{0.48\textwidth}
        \centering
        \includegraphics[width=\textwidth]{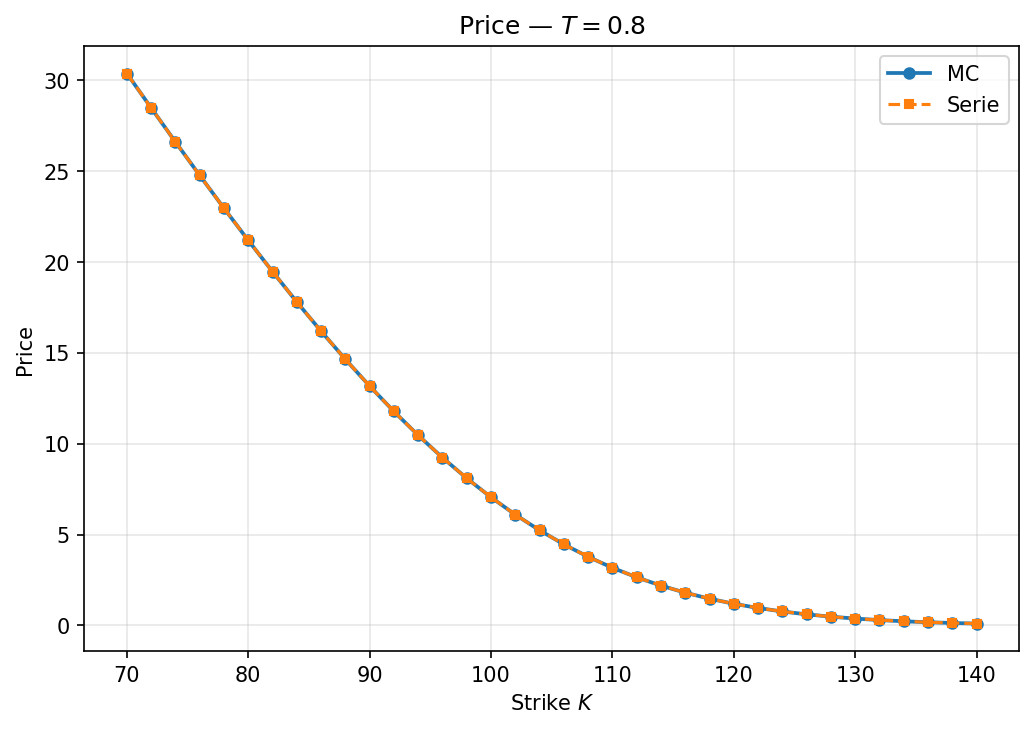}
    \end{minipage}
    \hfill
    \begin{minipage}{0.48\textwidth}
        \centering
        \includegraphics[width=\textwidth]{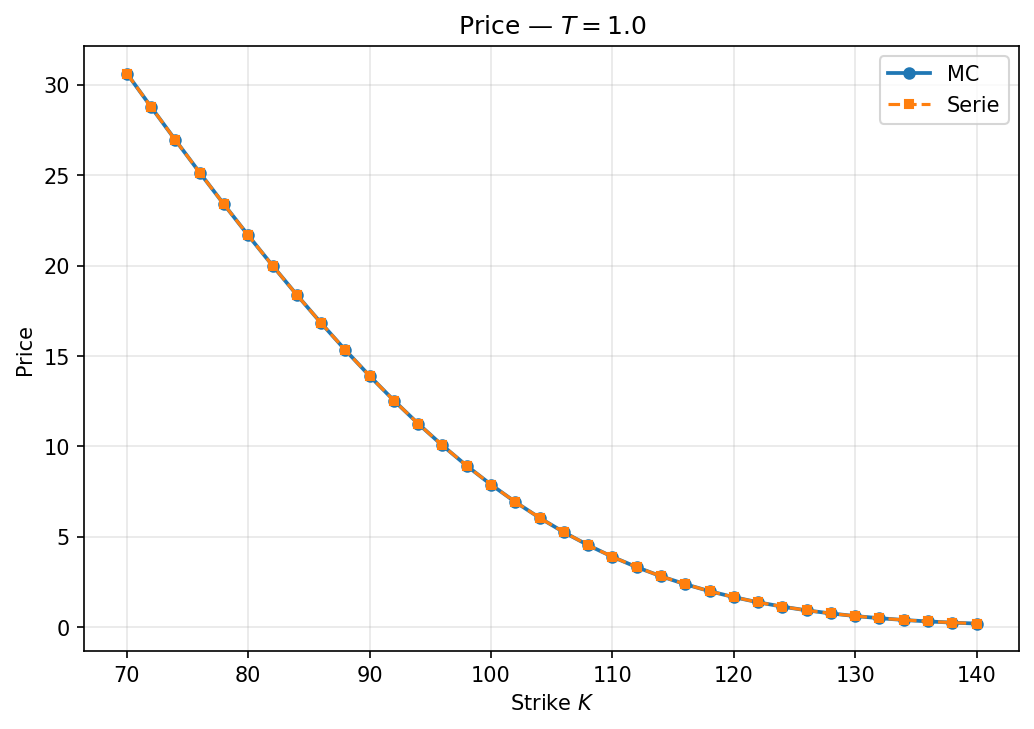}
    \end{minipage}

    \vspace{0.4cm}

    \begin{minipage}{0.48\textwidth}
        \centering
        \includegraphics[width=\textwidth]{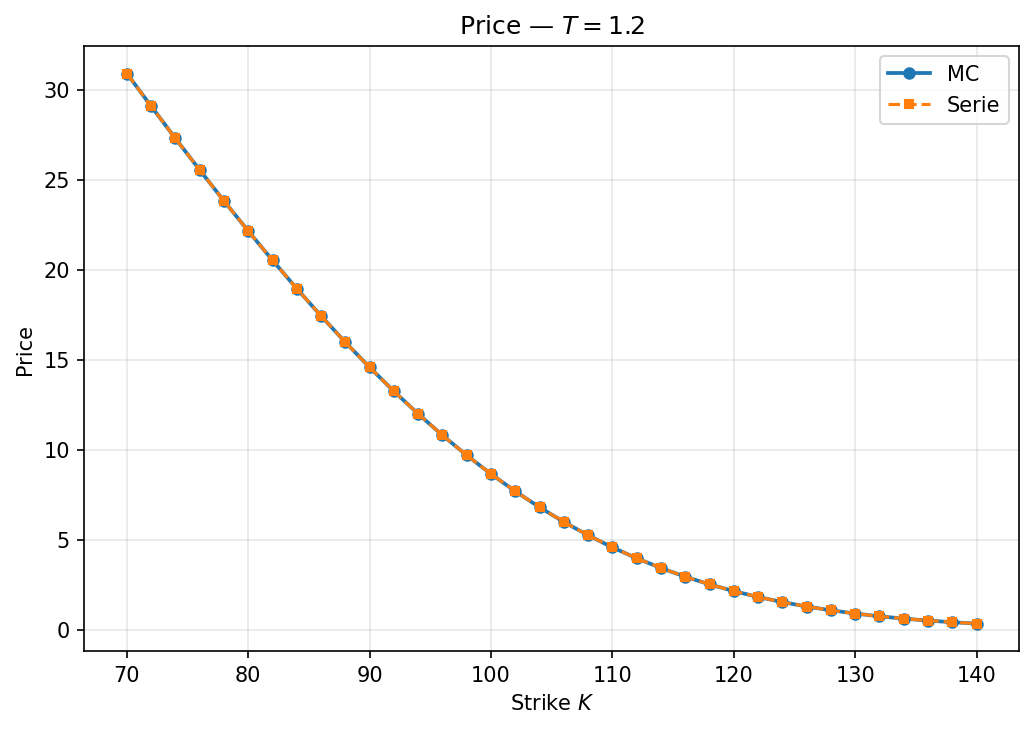}
    \end{minipage}

    \caption{Approximation of option prices for the Heston model.}
    \label{fig:heston_prices}
\end{figure}
\begin{figure}[H]
    \centering
    \begin{minipage}{0.48\textwidth}
        \centering
        \includegraphics[width=\textwidth]{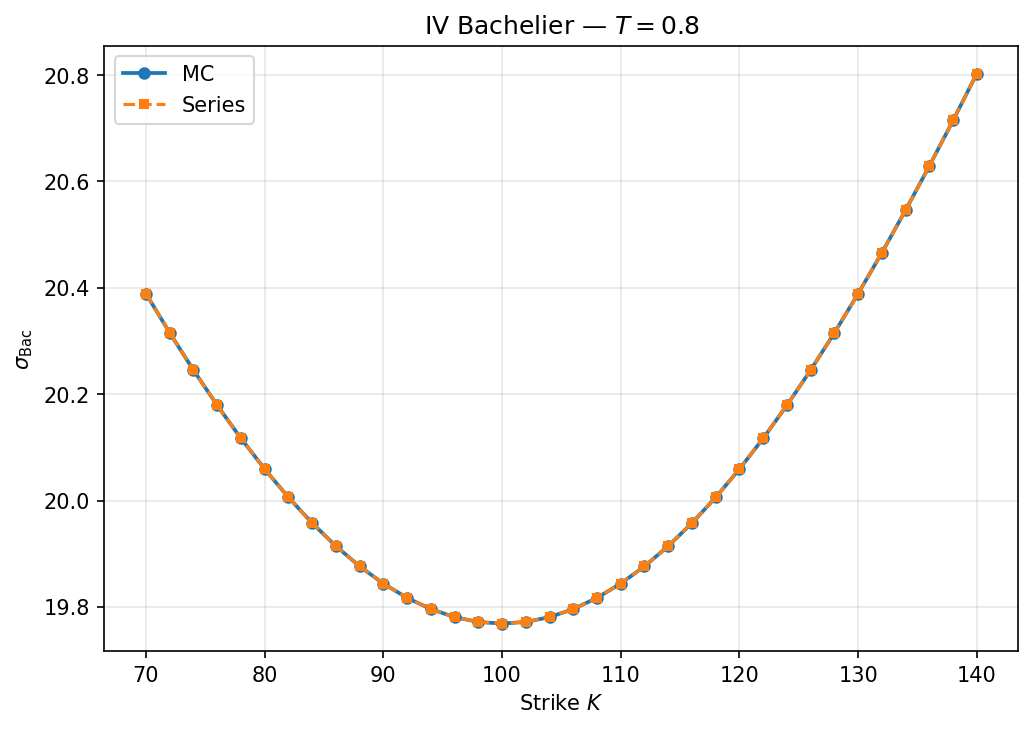}
    \end{minipage}
    \hfill
    \begin{minipage}{0.48\textwidth}
        \centering
        \includegraphics[width=\textwidth]{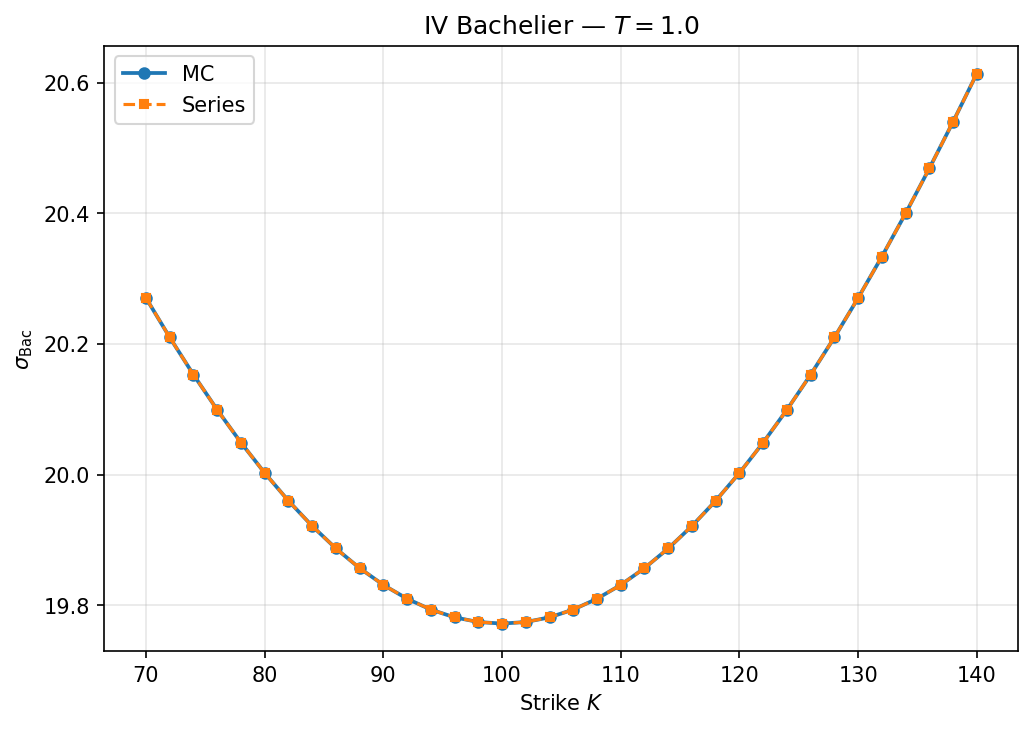}
    \end{minipage}

    \vspace{0.4cm}

    \begin{minipage}{0.48\textwidth}
        \centering
        \includegraphics[width=\textwidth]{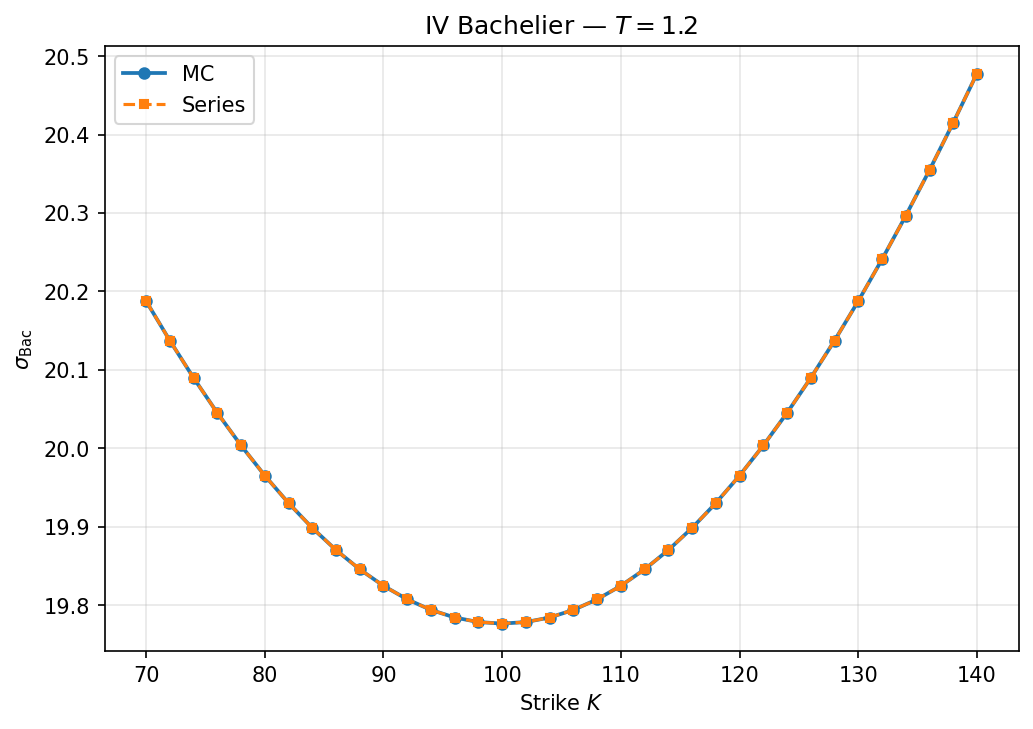}
    \end{minipage}

    \caption{Approximation of implied volatilities for the Heston model.}
    \label{fig: heston ivs}
\end{figure}
Since the precision for $N = 20$ is quite high, one can wonder how many terms are needed to obtain a certain level of accuracy. In order to answer this question, we have found the minimum number $N^*$ such that the error between the implied volatilites computed with $N^*$ and $N^*+1$ terms is less than $0.01$. In a sense, $N^*$ denotes the term in which adding more terms does not substantially change the approximation of the implied volatility. In Tables \ref{tab:noptim_T08}, \ref{tab:noptim_T10} and \ref{tab:noptim_T12} we detail such "optimal" number of terms $N^*$ for a selection of the options used for the implied volatility fitting.
\begin{table}[H]
    \centering
    \begin{tabular}{ccc}
        \hline
        Strike & $N^*$ & Error with next term \\
        \hline
        70  &  9 & 0.004039 \\
        78  &  5 & 0.004954 \\
        86  &  3 & 0.001626 \\
        94  &  1 & 0.002159 \\
        102 &  1 & 0.000025 \\
        110 &  2 & 0.001593 \\
        118 &  4 & 0.002405 \\
        126 &  7 & 0.003052 \\
        134 & 12 & 0.004557 \\
        \hline
    \end{tabular}
    \caption{Optimal number of terms and error for $T = 0.8$}
    \label{tab:noptim_T08}
\end{table}

\begin{table}[H]
    \centering
    \begin{tabular}{ccc}
        \hline
        Strike & $N^*$ & Error with next term \\
        \hline
        70  & 7 & 0.005036 \\
        78  & 4 & 0.006216 \\
        86  & 2 & 0.006591 \\
        94  & 1 & 0.001379 \\
        102 & 1 & 0.000016 \\
        110 & 2 & 0.000755 \\
        118 & 3 & 0.005396 \\
        126 & 5 & 0.009715 \\
        134 & 9 & 0.005078 \\
        \hline
    \end{tabular}
    \caption{Optimal number of terms and error for $T = 1.0$}
    \label{tab:noptim_T10}
\end{table}

\begin{table}[H]
    \centering
    \begin{tabular}{ccc}
        \hline
        Strike & $N^*$ & Error with next term \\
        \hline
        70  & 6 & 0.004298 \\
        78  & 4 & 0.002175 \\
        86  & 2 & 0.003477 \\
        94  & 1 & 0.000919 \\
        102 & 1 & 0.000011 \\
        110 & 1 & 0.007597 \\
        118 & 3 & 0.002329 \\
        126 & 5 & 0.002575 \\
        134 & 7 & 0.008735 \\
        \hline
    \end{tabular}
    \caption{Optimal number of terms and error for $T = 1.2$}
    \label{tab:noptim_T12}
\end{table}
\end{example}

\begin{example}[The SABR model]
Let us consider the SABR model where
$$\sigma_t=\sigma_0\exp\left(-\frac{\nu^2}{2} t+\nu B_t\right).$$
Then a direct computation leads to 
$$
M_0=\sigma_0^2 \left(\frac{\exp{\nu^2 T}-1}{\nu^2 T}\right).
$$

Consider the parameters $\sigma_0 = 20$, and $ \nu = 0.5$. In the following plots, we can see the goodness of approximation of the series for option prices and implied volatilities. As before, the benchmark has been obtained from 100,000 Monte Carlo simulations with antithetic variables, and for the expansion we have taken $N=30$ terms. In Figures \ref{fig:SABRprices} and \ref{fig:SABRivs} we can see how does our method fid the option prices and the implied volatilities.
\begin{figure}[H]
    \centering
    \begin{minipage}{0.48\textwidth}
        \centering
        \includegraphics[width=\textwidth]{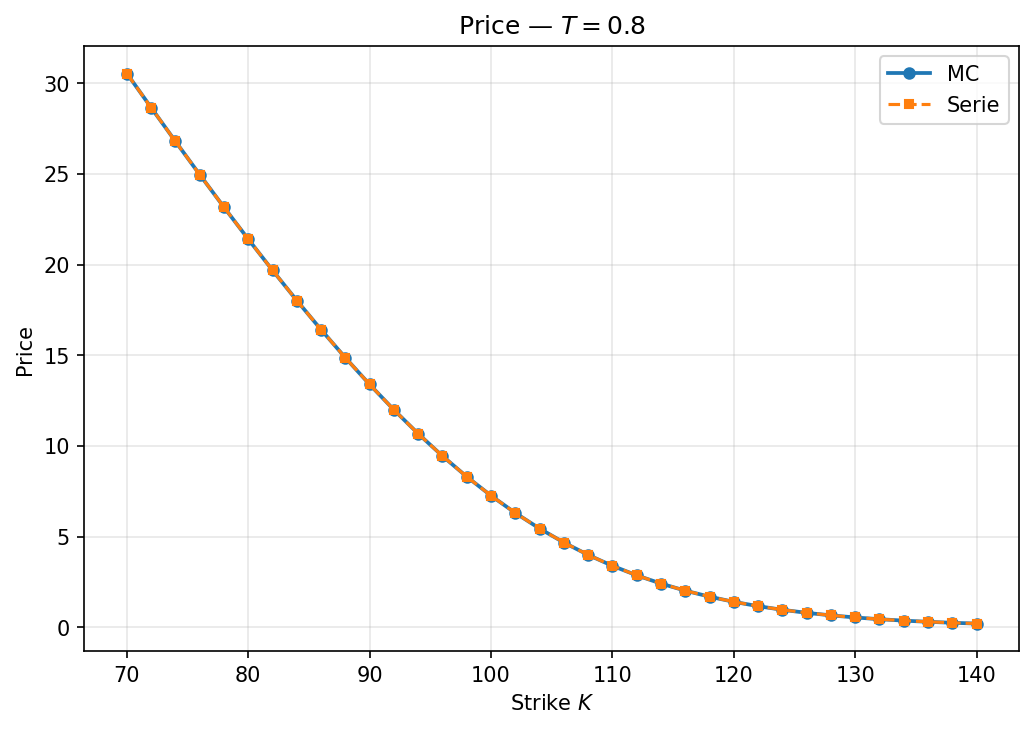}
    \end{minipage}
    \hfill
    \begin{minipage}{0.48\textwidth}
        \centering
        \includegraphics[width=\textwidth]{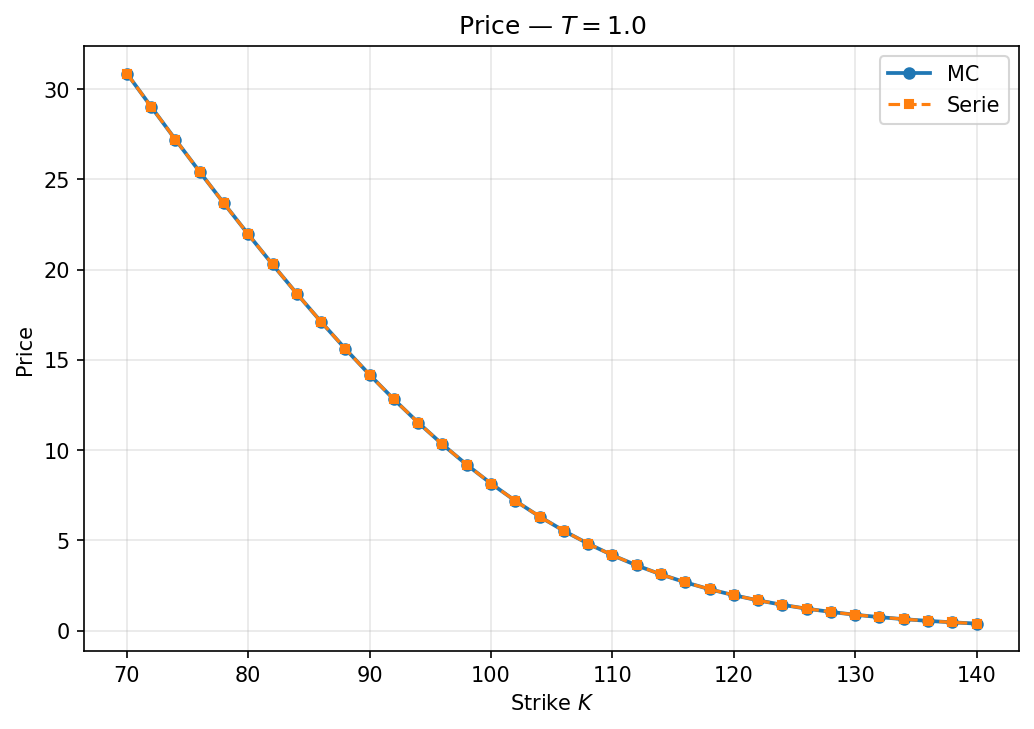}
    \end{minipage}

    \vspace{0.4cm}

    \begin{minipage}{0.48\textwidth}
        \centering
        \includegraphics[width=\textwidth]{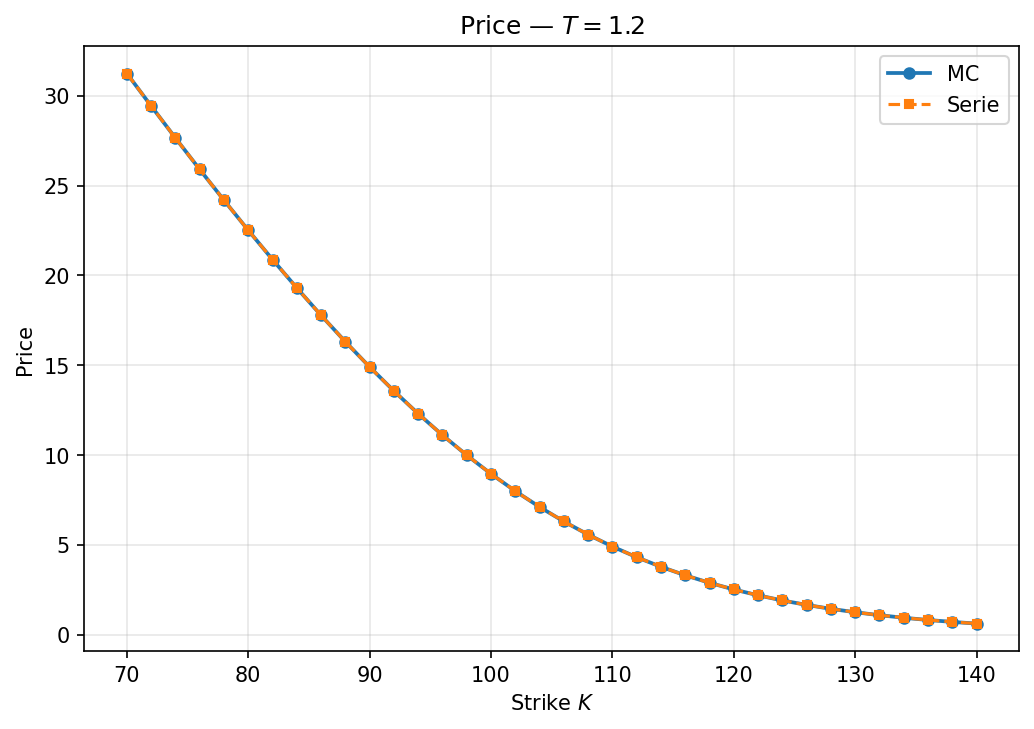}
    \end{minipage}

    \caption{Approximation of option prices for the SABR model}
    \label{fig:SABRprices}
\end{figure}

\begin{figure}[H]
    \centering
    \begin{minipage}{0.48\textwidth}
        \centering
        \includegraphics[width=\textwidth]{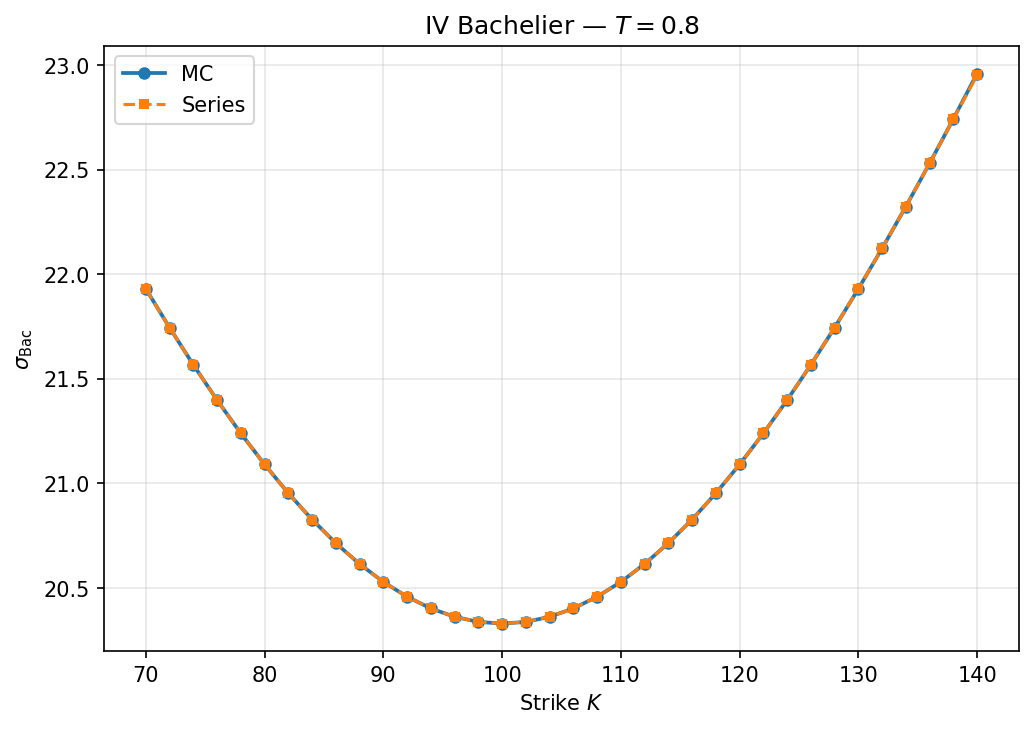}
    \end{minipage}
    \hfill
    \begin{minipage}{0.48\textwidth}
        \centering
        \includegraphics[width=\textwidth]{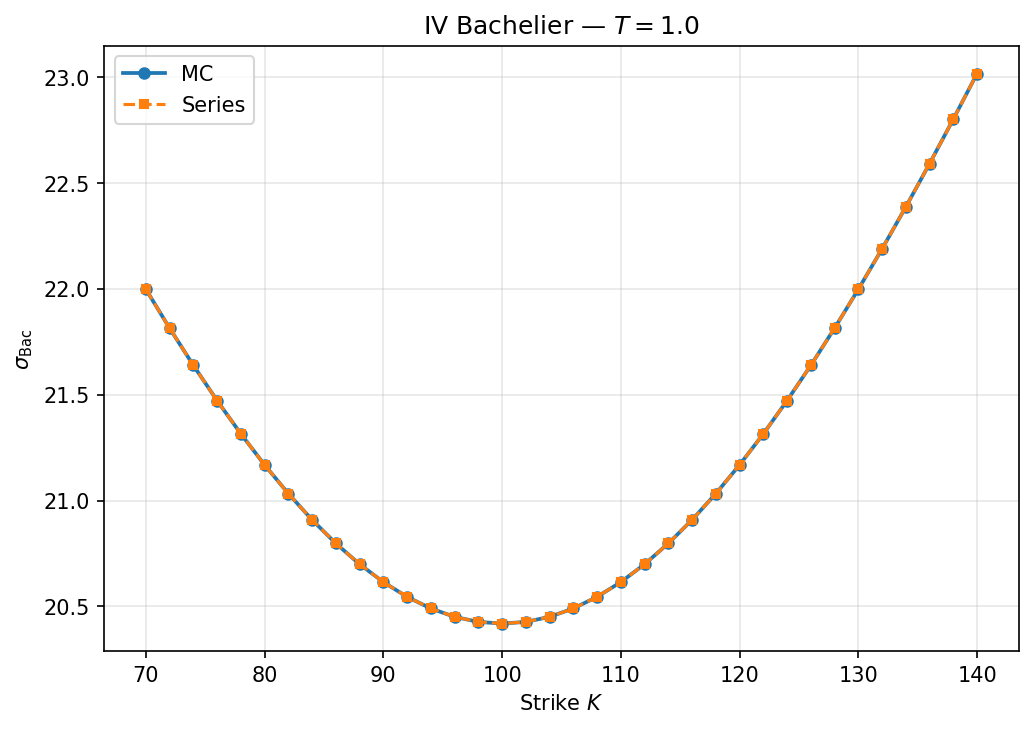}
    \end{minipage}

    \vspace{0.4cm}

    \begin{minipage}{0.48\textwidth}
        \centering
        \includegraphics[width=\textwidth]{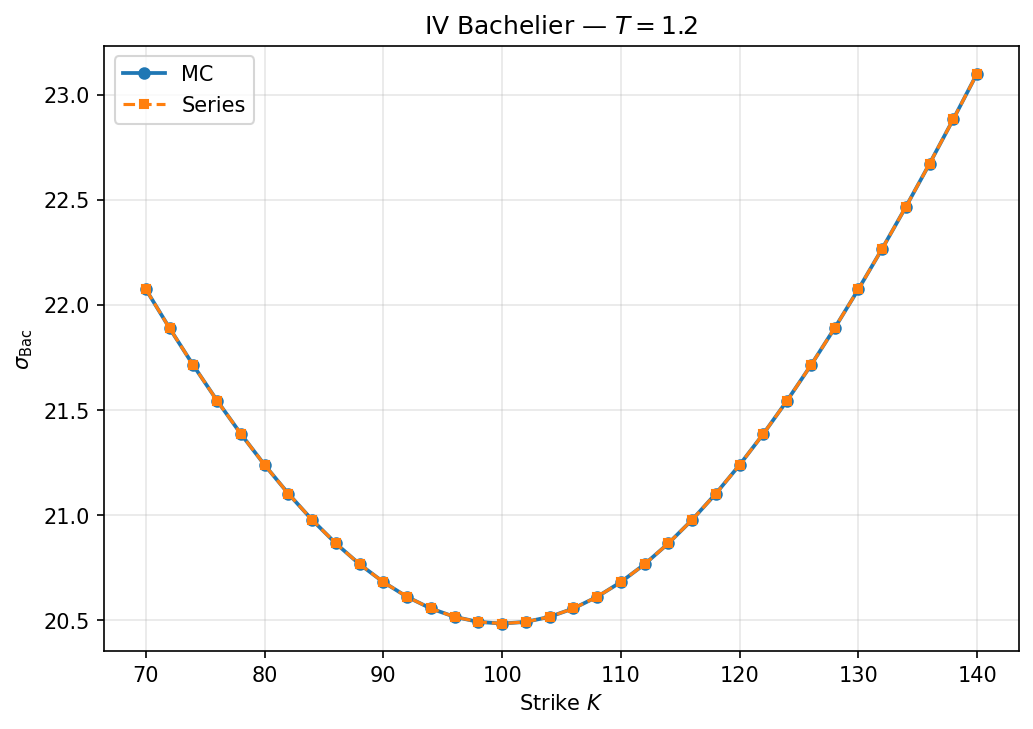}
    \end{minipage}

    \caption{Approximation of implied volatilities for the SABR model}
    \label{fig:SABRivs}
\end{figure}
As it can be observed, the fitting of the implied volatility smiles is, again, very precise. To give an idea of the accuracy of our method, in tables \ref{tab:iv_error_T08}, \ref{tab:iv_error_T10}, and \ref{tab:iv_error_T12} we display the relative error between the benchmark and the approximated implied volatilities in order to show that the implied volatilities obtained by the call prices computed as in Theorem \ref{principal} provide a great fit.

\begin{table}[H]
    \centering
    \begin{tabular}{SS | SS}
        \toprule
        {Strike} & {Rel. Error} & {Strike} & {Rel. Error} \\
        \midrule
        70  & 1.413e-9  & 106 & 3.550e-11 \\
        72  & 1.156e-9  & 108 & 6.412e-11 \\
        74  & 9.427e-10 & 110 & 1.022e-10 \\
        76  & 7.622e-10 & 112 & 1.508e-10 \\
        78  & 6.100e-10 & 114 & 2.113e-10 \\
        80  & 4.819e-10 & 116 & 2.853e-10 \\
        82  & 3.747e-10 & 118 & 3.747e-10 \\
        84  & 2.853e-10 & 120 & 4.819e-10 \\
        86  & 2.113e-10 & 122 & 6.100e-10 \\
        88  & 1.508e-10 & 124 & 7.622e-10 \\
        90  & 1.022e-10 & 126 & 9.427e-10 \\
        92  & 6.412e-11 & 128 & 1.156e-9  \\
        94  & 3.550e-11 & 130 & 1.413e-9  \\
        96  & 1.560e-11 & 132 & 1.947e-9  \\
        98  & 3.874e-12 & 134 & 1.350e-8  \\
        100 & 0         & 136 & 4.430e-7  \\
        102 & 3.874e-12 & 138 & 1.403e-5  \\
        104 & 1.560e-11 & 140 & 3.773e-4  \\
        \bottomrule
    \end{tabular}
    \caption{Relative error in implied volatility for $T = 0.8$.}
    \label{tab:iv_error_T08}
\end{table}

\begin{table}[H]
    \centering
    \begin{tabular}{SS | SS}
        \toprule
        {Strike} & {Rel. Error} & {Strike} & {Rel. Error} \\
        \midrule
        70  & 1.252e-8  & 106 & 3.592e-10 \\
        72  & 1.045e-8  & 108 & 6.460e-10 \\
        74  & 8.658e-9  & 110 & 1.024e-9  \\
        76  & 7.106e-9  & 112 & 1.501e-9  \\
        78  & 5.766e-9  & 114 & 2.086e-9  \\
        80  & 4.613e-9  & 116 & 2.791e-9  \\
        82  & 3.628e-9  & 118 & 3.628e-9  \\
        84  & 2.791e-9  & 120 & 4.613e-9  \\
        86  & 2.086e-9  & 122 & 5.766e-9  \\
        88  & 1.501e-9  & 124 & 7.106e-9  \\
        90  & 1.024e-9  & 126 & 8.658e-9  \\
        92  & 6.460e-10 & 128 & 1.045e-8  \\
        94  & 3.592e-10 & 130 & 1.252e-8  \\
        96  & 1.584e-10 & 132 & 1.489e-8  \\
        98  & 3.939e-11 & 134 & 1.752e-8  \\
        100 & 0         & 136 & 1.716e-8  \\
        102 & 3.939e-11 & 138 & 8.665e-8  \\
        104 & 1.584e-10 & 140 & 2.873e-6  \\
        \bottomrule
    \end{tabular}
    \caption{Relative error in implied volatility for $T = 1.0$.}
    \label{tab:iv_error_T10}
\end{table}

\begin{table}[H]
    \centering
    \begin{tabular}{SS | SS}
        \toprule
        {Strike} & {Rel. Error} & {Strike} & {Rel. Error} \\
        \midrule
        70  & 5.912e-9  & 106 & 1.845e-10 \\
        72  & 4.991e-9  & 108 & 3.309e-10 \\
        74  & 4.177e-9  & 110 & 5.227e-10 \\
        76  & 3.461e-9  & 112 & 7.628e-10 \\
        78  & 2.833e-9  & 114 & 1.055e-9  \\
        80  & 2.286e-9  & 116 & 1.402e-9  \\
        82  & 1.811e-9  & 118 & 1.811e-9  \\
        84  & 1.402e-9  & 120 & 2.286e-9  \\
        86  & 1.055e-9  & 122 & 2.833e-9  \\
        88  & 7.628e-10 & 124 & 3.461e-9  \\
        90  & 5.227e-10 & 126 & 4.177e-9  \\
        92  & 3.309e-10 & 128 & 4.991e-9  \\
        94  & 1.845e-10 & 130 & 5.912e-9  \\
        96  & 8.150e-11 & 132 & 6.955e-9  \\
        98  & 2.030e-11 & 134 & 8.241e-9  \\
        100 & 0         & 136 & 1.370e-8  \\
        102 & 2.030e-11 & 138 & 1.400e-7  \\
        104 & 8.150e-11 & 140 & 3.316e-6  \\
        \bottomrule
    \end{tabular}
    \caption{Relative error in implied volatility for $T = 1.2$.}
    \label{tab:iv_error_T12}
\end{table}
\end{example}
\begin{example}[Computation of Greeks]
As it has been mentioned in Remark \ref{Greeks}, differentiating with respect to $X_0$ the expression derived in Theorem \ref{principal} provides an analytical way to compute the $\Delta$ and the $\Gamma$ of the options in a fast and accurate way. To show it, consider first the Heston model with the same set of parameters as in Example  4.1. We will compute the $\Delta$ of several options under the Bachelier Heston model with our method and we will compare it to the $\Delta$ obtained by finite differences with a step-size $h = 10^{-3}$. The benchmark is the $\Delta$ computed by differentiating the conditional Monte Carlo expectation. As it is seen in Figure \ref{fig:heston_delta}, the three methods provide an excellent fit of the $\Delta$ of the option. The difference between our method and the other two is the computational cost. In Table \ref{tab:computation_times} we see that our method is the fastest for the computation of the $\Delta$.
\begin{figure}[H]
    \centering
    \includegraphics[width=0.5\linewidth]{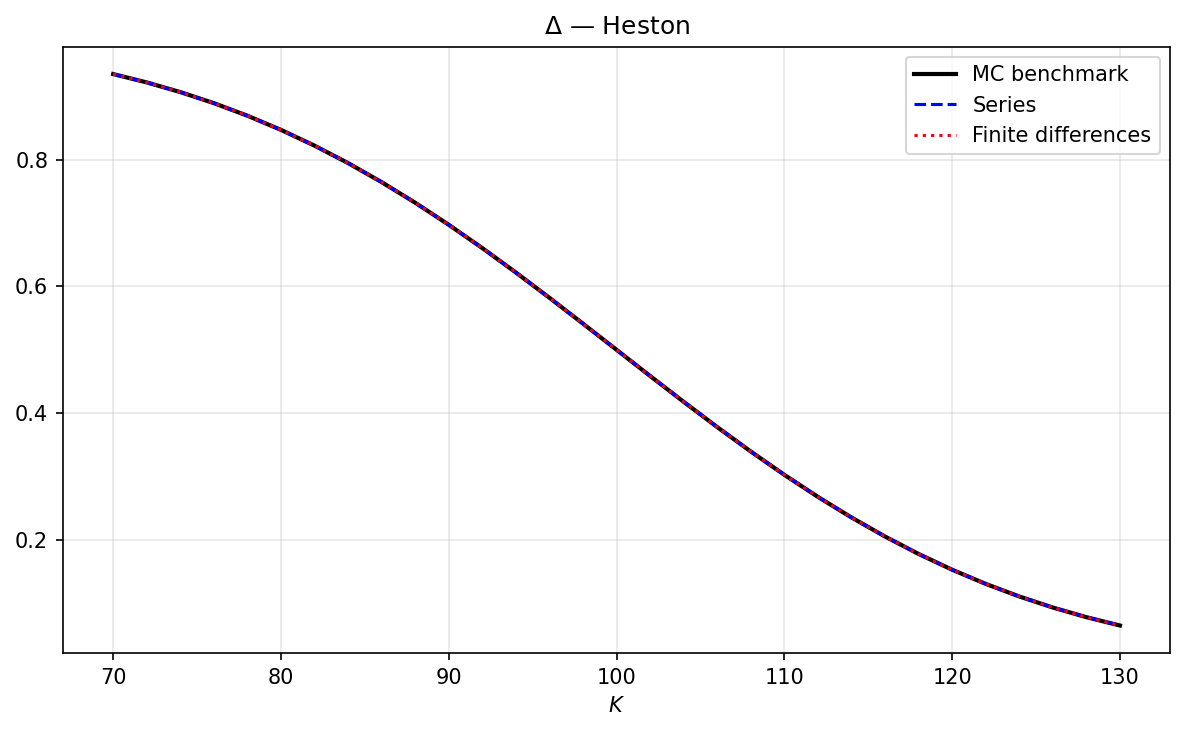 }
    \caption{Greek $\Delta$ computed by the 3 stated methods with $X_0 = 100$ and $T = 1.0$.}
    \label{fig:heston_delta}
\end{figure}

\begin{table}[H]
    \centering
    \begin{tabular}{lc}
        \toprule
        Method & Time (seconds) \\
        \midrule
        Benchmark MC       & 2.49  \\
        \textbf{Series}             & \textbf{0.42}  \\
        Finite Differences & 46.42 \\
        \bottomrule
    \end{tabular}
    \caption{Computation times for the different methods.}
    \label{tab:computation_times}
\end{table}
A similar phenomenon happens with the computation of Gamma, in this case we consider the Bachelier SABR model with $X_0 = 2$, $\sigma_0 = 0.7$ and $\nu = 0.3$ for the sake of diversity. In figure \ref{fig:Gamma_sabr_ir} we observe that again the fit provided by the 3 methods is excellent. Table \ref{tab:computation_times_sabr_ir} shows again that our method over-performs the other two in computational speed. 
\begin{figure}[H]
    \centering
    \includegraphics[width=0.5\linewidth]{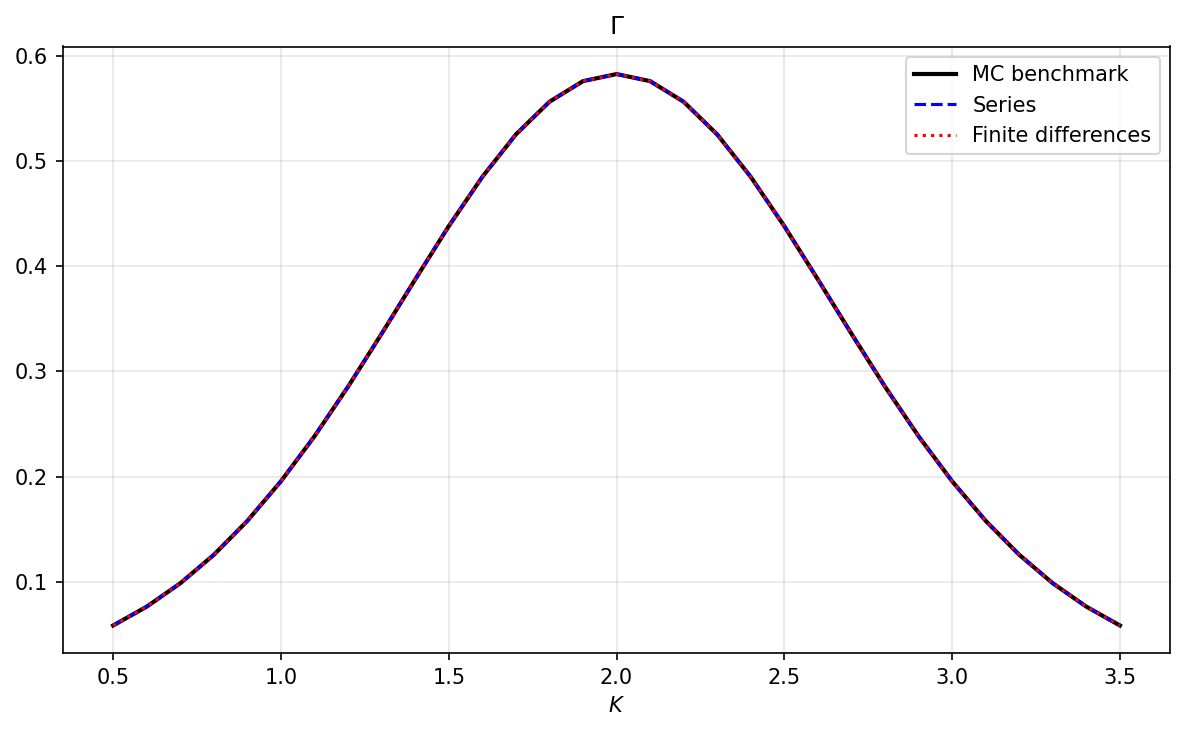}
    \caption{Greek $\Gamma$ computed by the 3 stated methods with $X_0 = 2$ and $T = 1.0$}
    \label{fig:Gamma_sabr_ir}
\end{figure}
\begin{table}[H]
    \centering
    \begin{tabular}{lc}
        \toprule
        \textbf{Method} & \textbf{Time (s)} \\
        \midrule
        Benchmark MC       & 3.92   \\
        \textbf{Series}             & \textbf{0.95}   \\
        Finite Differences & 125.25 \\
        \bottomrule
    \end{tabular}
    \caption{Computation times for the different methods}
    \label{tab:computation_times_sabr_ir}
\end{table}
\end{example}
\begin{example}[Monte Carlo Variance Reduction]
Another iinteresting quality of the expansion provided in Theorem \ref{principal} is that, for certain option, it works as a great control variate. To show the variance reduction, we will consider three different models used for computing options:
\begin{enumerate}
    \item [\textbf{(I)}] \textbf{Bachelier Heston} model with $X_0 = 100$, $\sigma_0 = 20$, $\kappa = 2$, $\theta = 400$, $\nu = 20$ and $\rho =-0.3$.
    \item [\textbf{(II)}] \textbf{Bachelier SABR} model with $X_0 = 100$, $\sigma_0 = 20$, $\nu = 0.5$ and $\rho = -0.5$.
    \item [\textbf{(III)}] \textbf{Bachelier SABR} model with $X_0 = 2$, $\sigma_0 = 0.7$, $\nu = 0.3$ and $\rho = -0.3$.
\end{enumerate}
As control variates, we will study the variance reduction of the following choices:
\begin{itemize}
    \item [\textbf{(CV1)}] A linear control variate $X_T - X_0$ where $X$ follows one of the models \textbf{(I)}--\textbf{(III)}.
    \item [\textbf{(CV2)}] A control variate based on the variance swap, that is,
    \[
    \frac{1}{T}\int_0^T \sigma_s^2 ds - E\left[\frac{1}{T}\int_0^T \sigma_s^2 ds \right].
    \]
    \item [\textbf{(CV3)}] A control variate based on the volatility swap, that is,
    \[
    \sqrt{\frac{1}{T}\int_0^T \sigma_s^2 ds} - E\left[\sqrt{\frac{1}{T}\int_0^T \sigma_s^2 ds }\right].
    \]
     \item [\textbf{(CV4)}] A control variate based on the expansion given in Theorem \ref{principal}, that is,
     \[
     (X_T^0 - K)_+ - V,
     \]
     where $X^0$ denotes one of the models \textbf{(I)}--\textbf{(III)} with $\rho = 0$ and $V$ is the price of the option under model $X^0$ computed via the expansion given in Theorem \ref{principal}.
\end{itemize}
For every control variate $Z$ selected between \textbf{(CV1)}--\textbf{(CV4)} we will find $\beta^*$ such that
\[
\beta^* = \arg \min_{\beta} \Var \left( (X_T - K)_+ - \beta Z\right).
\]
In order to highlight the variance reduction, we will plot the following two quantities:
\[
 \Var \left( (X_T - K)_+ - \beta^* Z\right), \quad \frac{\Var((X_T - K)_+)}{ \Var \left( (X_T - K)_+ - \beta^* Z \right)}.
\]
\begin{figure}[H]
    \centering
    \begin{minipage}{0.48\textwidth}
        \centering
        \includegraphics[width=\textwidth]{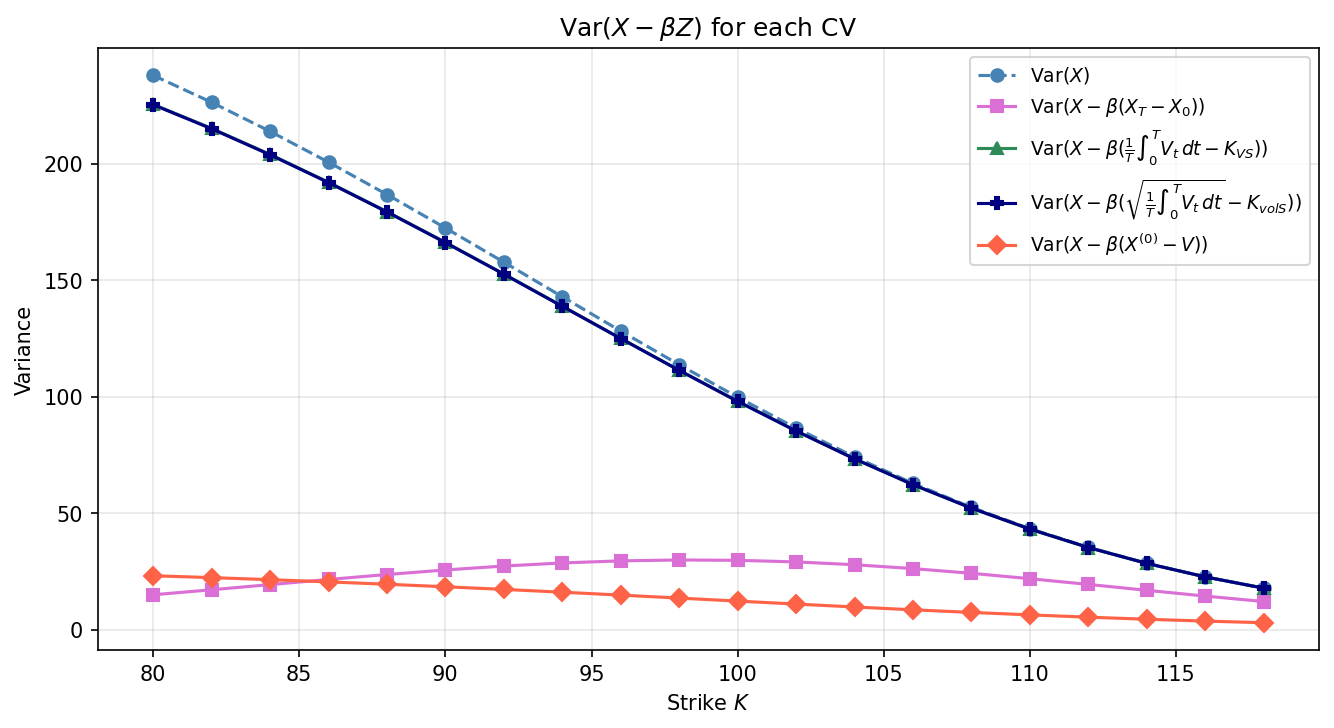}
    \end{minipage}
    \hfill
    \begin{minipage}{0.48\textwidth}
        \centering
        \includegraphics[width=\textwidth]{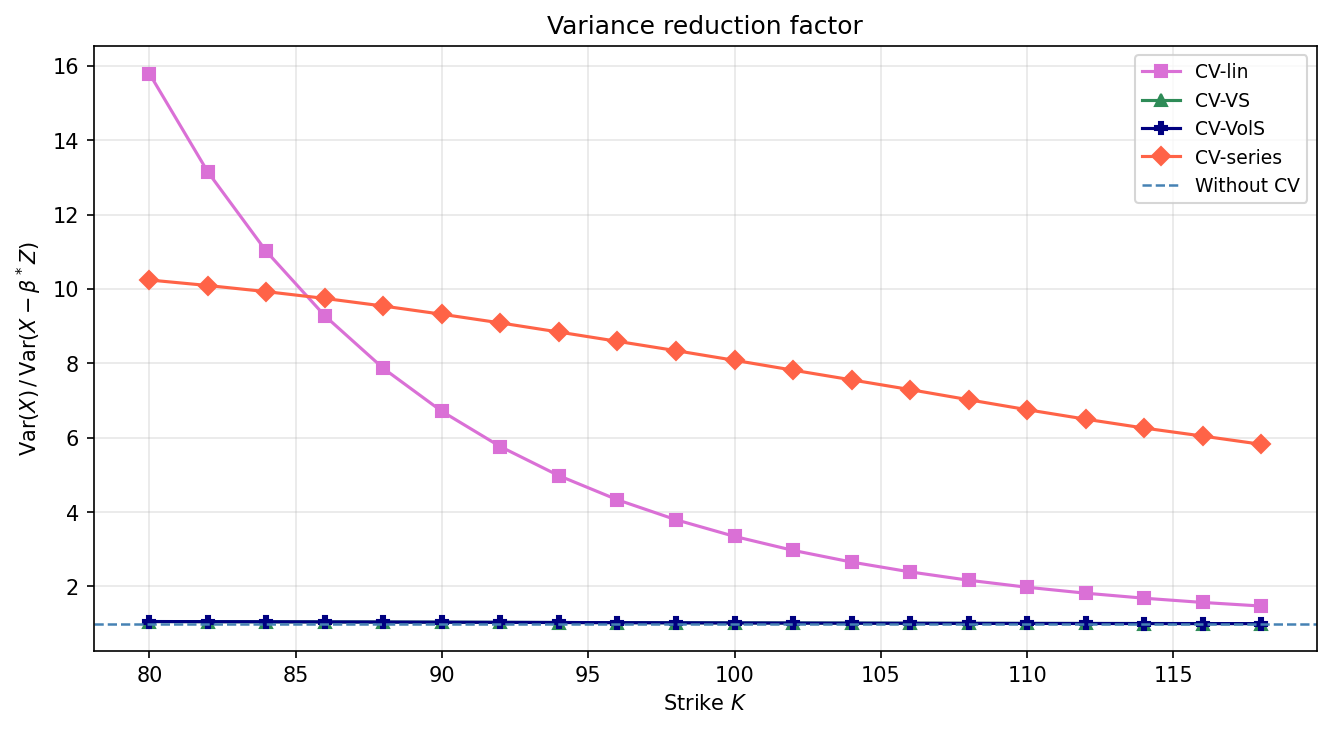}
    \end{minipage}
    \caption{Variance and variance reduction factor for each control variate in model \textbf{(I)}}
    \label{fig:cv_heston}
\end{figure}
\begin{figure}[H]
    \centering
    \begin{minipage}{0.48\textwidth}
        \centering
        \includegraphics[width=\textwidth]{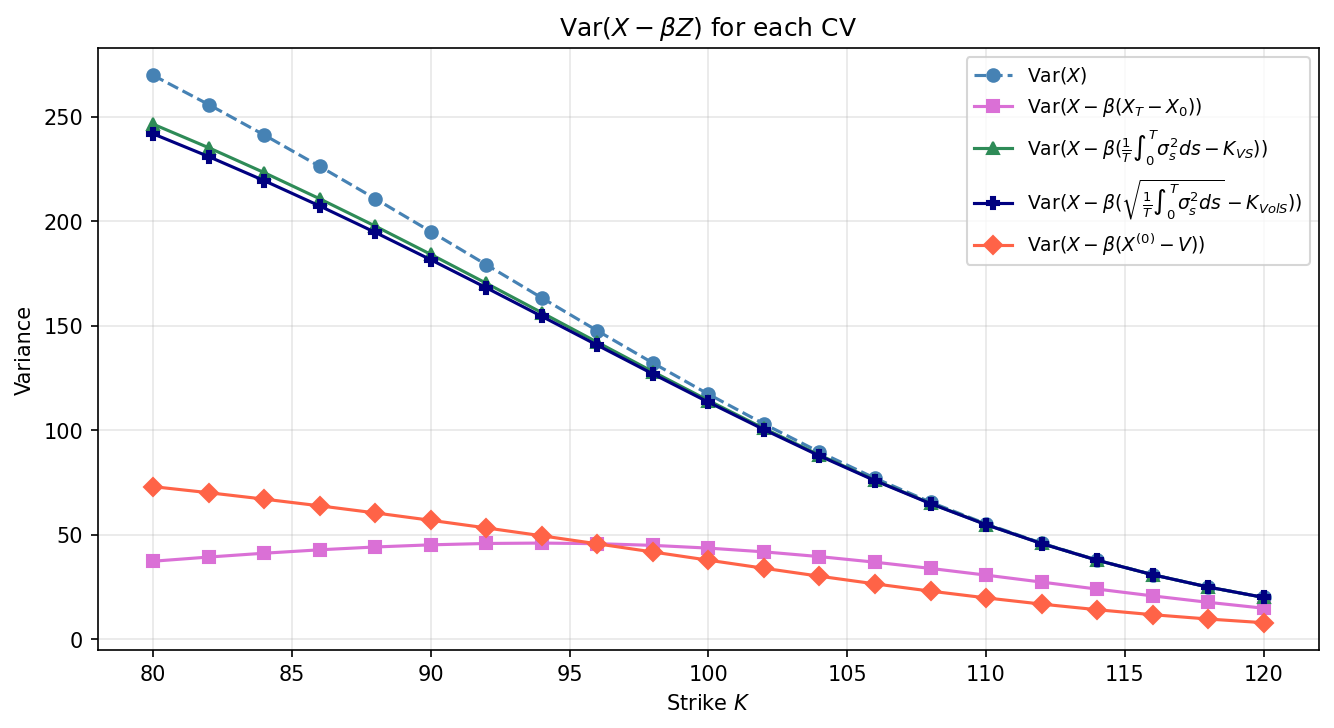}
    \end{minipage}
    \hfill
    \begin{minipage}{0.48\textwidth}
        \centering
        \includegraphics[width=\textwidth]{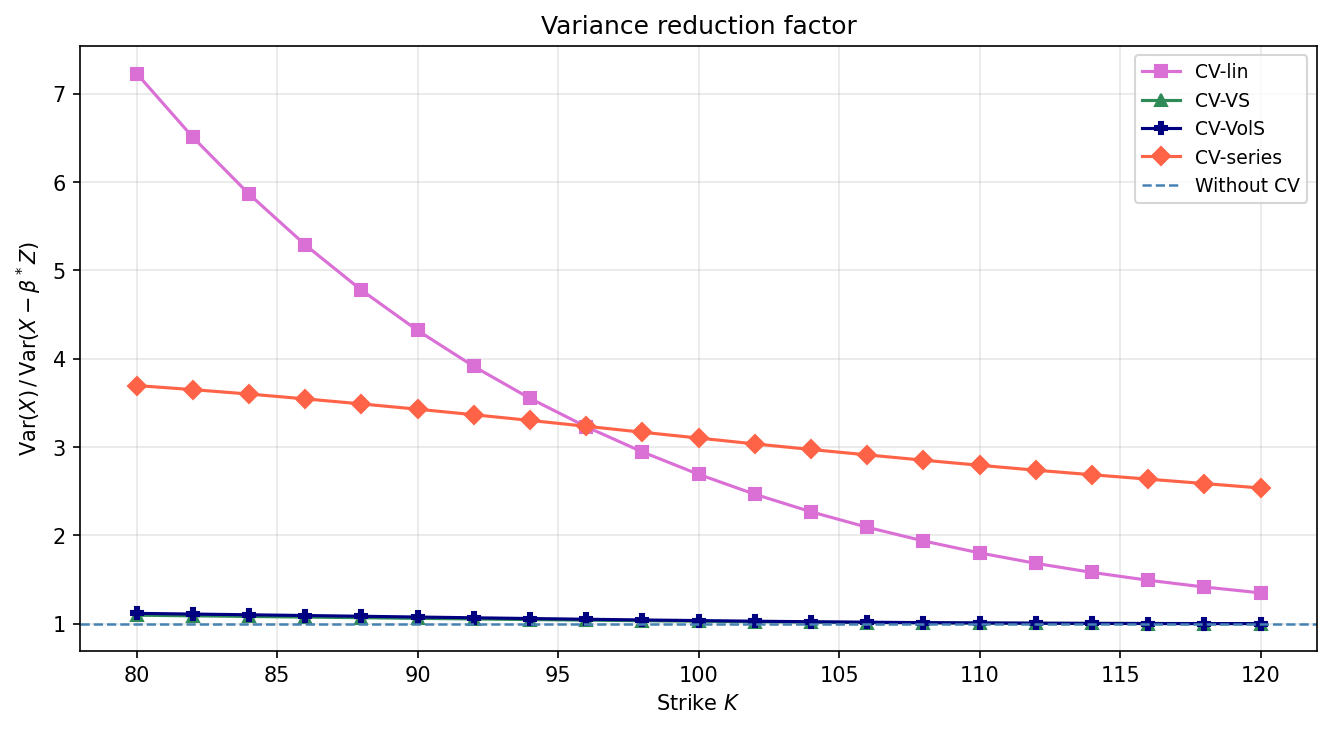}
    \end{minipage}
    \caption{Variance and variance reduction factor for each control variate in model \textbf{(II)}}
    \label{fig:cv_sabr}
\end{figure}

\begin{figure}[H]
    \centering
    \begin{minipage}{0.48\textwidth}
        \centering
        \includegraphics[width=\textwidth]{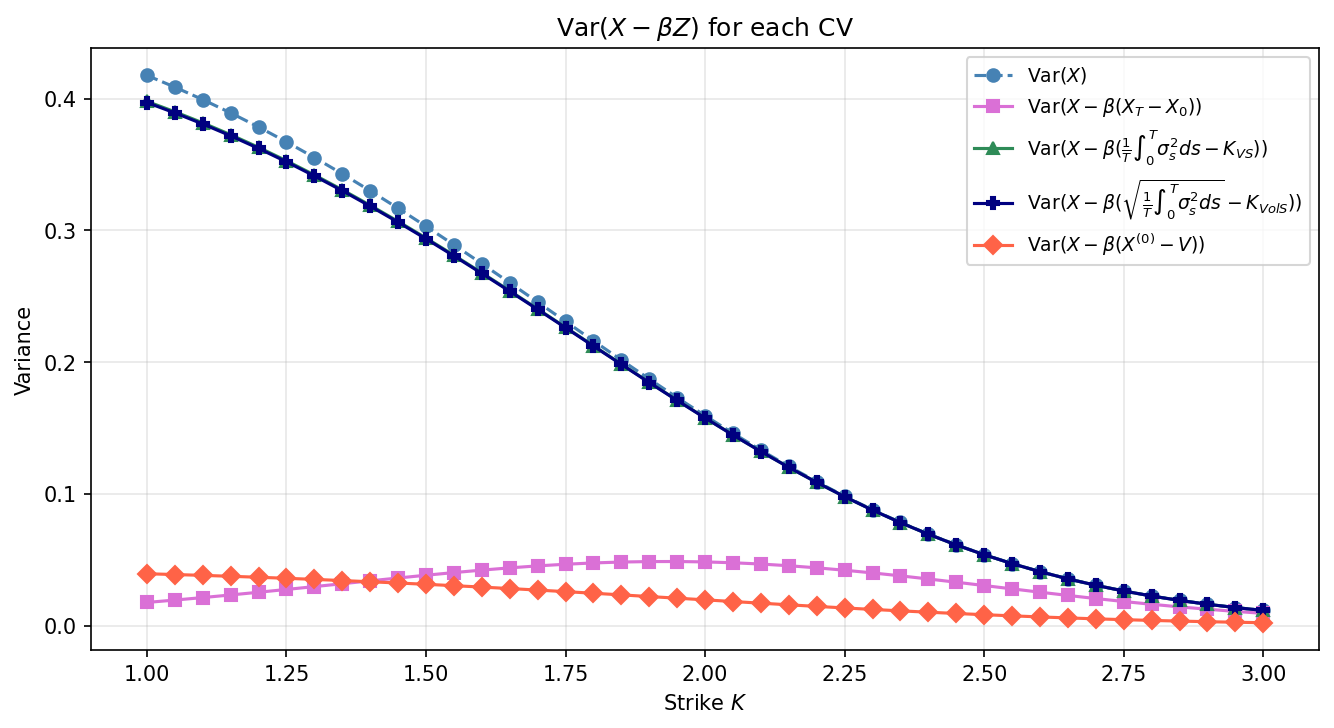}
    \end{minipage}
    \hfill
    \begin{minipage}{0.48\textwidth}
        \centering
        \includegraphics[width=\textwidth]{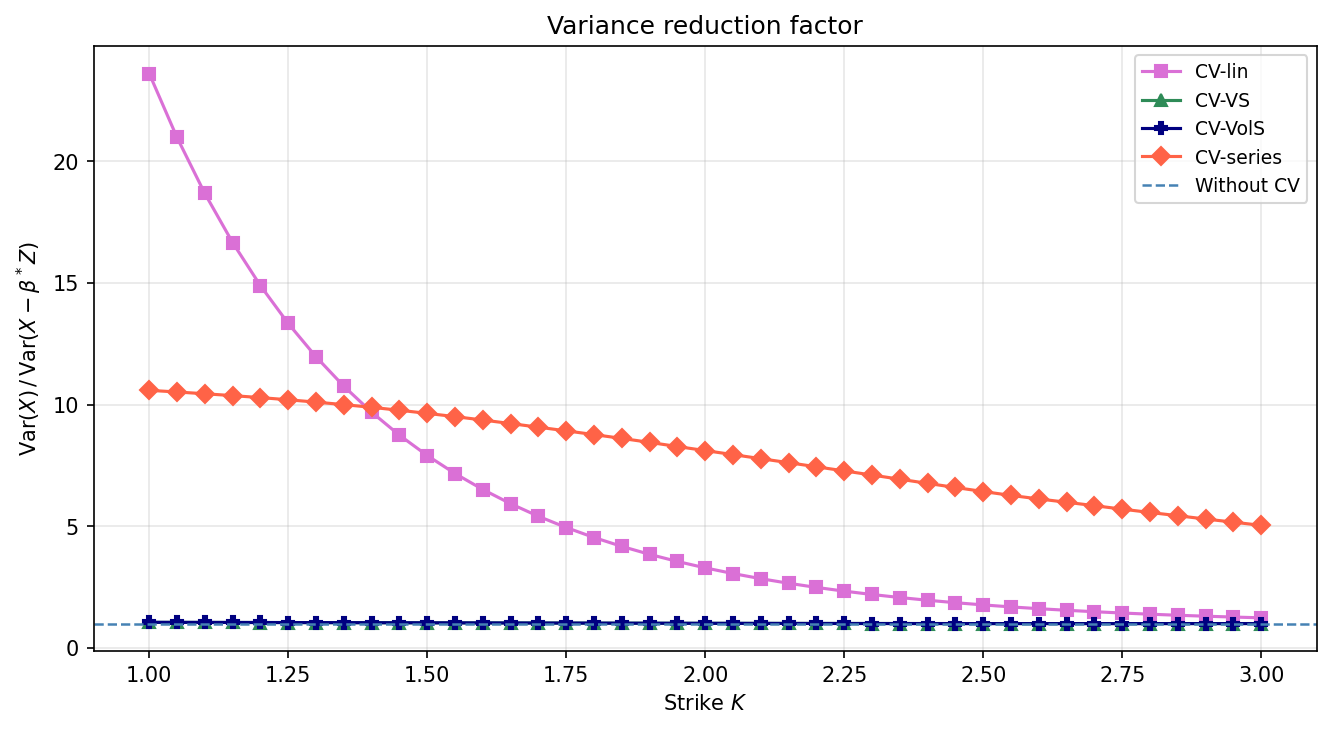}
    \end{minipage}
    \caption{Variance and variance reduction factor for each control variate in model \textbf{(III)}}
    \label{fig:cv_sabr2}
\end{figure}
In Figures \ref{fig:cv_heston}, \ref{fig:cv_sabr} and \ref{fig:cv_sabr2} we see that our control variate, \textbf{(CV4)} outperforms the other control variates in the OTM regime. Near ATM our control variate works better when $\rho = -0.3$. In fact, it is expected that the performance of our control variate decreases as $|\rho| \to 1$. In the deep ITM regime, since the payoff satisfies $(X_T - K)_+ \approx X_T - K$, it is natural that the linear control variate $X_T - X_0$ is the one that exhibits the major variance reduction.
\end{example}

\bibliographystyle{apalike}
\bibliography{references.bib}

@article{baviera2025smile,
  title={Smile asymptotic for Bachelier Implied Volatility},
  author={Baviera, Roberto and Massaria, Michele Domenico},
  journal={arXiv preprint arXiv:2506.08067},
  year={2025}
}

@article{unkwnown,
  title={On the Bachelier implied volatility at extreme strikes},
  author={Floc'h, Fabien Le},
  journal={arXiv preprint arXiv:2211.10232},
  year={2022}
}

@article{alos2025short,
  title={Short-time behavior of the At-The-Money implied volatility for the jump-diffusion stochastic volatility Bachelier model},
  author={Al{\`o}s, Elisa and Bur{\'e}s, {\`O}scar and Vives, Josep},
  journal={arXiv preprint arXiv:2503.22282},
  year={2025}
}

@article{alos2012decomposition,
  title={A decomposition formula for option prices in the Heston model and applications to option pricing approximation},
  author={Al{\`o}s, Elisa},
  journal={Finance and Stochastics},
  volume={16},
  number={3},
  pages={403--422},
  year={2012},
  publisher={Springer}
}

@article{ASENS_1900_3_17__21_0,
     author = {Bachelier, L.},
     title = {Th\'eorie de la sp\'eculation},
     journal = {Annales scientifiques de l'\'Ecole Normale Sup\'erieure},
     pages = {21--86},
     year = {1900},
     publisher = {Elsevier},
     volume = {3e s{\'e}rie, 17},
     doi = {10.24033/asens.476},
     language = {fr},
     url = {https://www.numdam.org/articles/10.24033/asens.476/}
}

@article{BergomiGuyon2012,
  author  = {Lorenzo Bergomi and Julien Guyon},
  title   = {Stochastic Volatility's Orderly Smiles},
  journal = {Risk Magazine},
  year    = {2012},
  month   = {May},
  pages   = {60--66}
}

@article{choi2022black,
  author    = {Jaehyuk Choi and Minsuk Kwak and Chyng Wen Tee and Yumeng Wang},
  title     = {{A Black--Scholes user's guide to the Bachelier model}},
  journal   = {Journal of Futures Markets},
  volume    = {42},
  number    = {5},
  pages     = {959--980},
  year      = {2022},
  publisher = {Wiley Online Library}
}

@book{fouque2000derivatives,
  title={Derivatives in financial markets with stochastic volatility},
  author={Fouque, Jean-Pierre and Papanicolaou, George and Sircar, K Ronnie},
  year={2000},
  publisher={Cambridge University Press}
}

@article{fouque2,
  author = {Fouque, J. P. and Papanicolaou, G. and Sircar, R. and Solna, K.},
  title = {Singular Perturbations in Option Pricing},
  journal = {SIAM Journal on Applied Mathematics},
  volume = {63},
  number = {5},
  pages = {1648-1665},
  year = {2003},
  doi = {10.1137/S0036139902401550}
}

@article{hagan2002managing,
  title={Managing smile risk},
  author={Hagan, Patrick S and Kumar, Deep and Lesniewski, Andrew S and Woodward, Diana E},
  journal={The Best of Wilmott},
  volume={1},
  number={1},
  pages={249--296},
  year={2002}
}

@article{fukasawa2011asymptotic,
  title={{Asymptotic analysis for stochastic volatility: martingale expansion}},
  author={Fukasawa, Masaaki},
  journal={Finance and Stochastics},
  volume={15},
  pages={635--654},
  year={2011},
  publisher={Springer}
}

@article{Lewis02092022,
author = {Alan L. Lewis and Dan Pirjol},
title = {Proof of non-convergence of the short-maturity expansion for the SABR model},
journal = {Quantitative Finance},
volume = {22},
number = {9},
pages = {1747--1757},
year = {2022},
publisher = {Routledge},
doi = {10.1080/14697688.2022.2071759}}

@book {lewis2000option,
    AUTHOR = {Lewis, Alan L.},
     TITLE = {Option valuation under stochastic volatility. {II}},
      NOTE = {With Mathematica code},
 PUBLISHER = {Finance Press, Newport Beach, CA},
      YEAR = {2016},
     PAGES = {viii+737},
      ISBN = {978-0-9676372-1-1; 0-9676372-1-X},
   MRCLASS = {91-02 (60H30 60J75 62P05 65Cxx 65Mxx 91B70 91G20)},
  MRNUMBER = {3526206},
}

@article {antonelli2006pricing,
    AUTHOR = {Antonelli, Fabio and Scarlatti, Sergio},
     TITLE = {Pricing options under stochastic volatility: a power series
              approach},
   JOURNAL = {Finance Stoch.},
  FJOURNAL = {Finance and Stochastics},
    VOLUME = {13},
      YEAR = {2009},
    NUMBER = {2},
     PAGES = {269--303},
      ISSN = {0949-2984,1432-1122},
   MRCLASS = {91Gxx},
  MRNUMBER = {2482054},
       DOI = {10.1007/s00780-008-0086-4},
       URL = {https://doi.org/10.1007/s00780-008-0086-4},
}

@article {alos2020exponentiation,
    AUTHOR = {Al\`os, Elisa and Gatheral, Jim and Radoi\v{c}i\'{c},
              Rado\v{s}},
     TITLE = {Exponentiation of conditional expectations under stochastic
              volatility},
   JOURNAL = {Quant. Finance},
  FJOURNAL = {Quantitative Finance},
    VOLUME = {20},
      YEAR = {2020},
    NUMBER = {1},
     PAGES = {13--27},
      ISSN = {1469-7688,1469-7696},
   MRCLASS = {91G20 (60H30 60H99 91B70)},
  MRNUMBER = {4040259},
MRREVIEWER = {Piotr\ Nowak},
       DOI = {10.1080/14697688.2019.1642506},
       URL = {https://doi.org/10.1080/14697688.2019.1642506},
}

\end{document}